\documentclass[a4paper,12pt,reqno]{amsart}
\usepackage{amssymb,amsthm, amsmath}
\usepackage{ifthen}
\usepackage{graphicx}
\usepackage{float}
\usepackage{tcolorbox}
\usepackage{diagbox}
\usepackage{graphicx}
\usepackage{subcaption}
\captionsetup[subfigure]{font=footnotesize, labelfont=footnotesize} 
\captionsetup{subrefformat=parens}
\usepackage{listings}  
\usepackage{mathrsfs, bm}
\usepackage{xcolor}            
\usepackage{caption}
\usepackage{mathtools}
\usepackage{algorithm}
\usepackage{algpseudocode}
\usepackage{graphicx}
\usepackage{hyperref}

\definecolor{codegreen}{rgb}{0,0.6,0}
\definecolor{codegray}{rgb}{0.5,0.5,0.5}
\definecolor{codepurple}{rgb}{0.58,0,0.82}
\definecolor{backcolour}{rgb}{0.95,0.95,0.92}

\lstdefinestyle{mystyle}{
    backgroundcolor=\color{backcolour},   
    commentstyle=\color{codegreen},
    keywordstyle=\color{magenta},
    numberstyle=\tiny\color{codegray},
    stringstyle=\color{codepurple},
    basicstyle=\ttfamily\footnotesize,
    breakatwhitespace=false,         
    breaklines=true,                 
    captionpos=b,                    
    keepspaces=true,                 
    numbers=left,                    
    numbersep=6pt,                  
    showspaces=false,                
    showstringspaces=false,
    showtabs=false,                  
    tabsize=4
}
\setlength{\textwidth}{15cm} \setlength{\oddsidemargin}{0cm}
\setlength{\evensidemargin}{0cm} 
\setlength{\footskip}{40pt}
\pagestyle{plain}
\pagestyle{plain}
\theoremstyle{plain}
\theoremstyle{plain}

\newtheorem{thm}{Theorem}[section]
\newtheorem*{thm*}{Theorem}

\numberwithin{equation}{section}

\numberwithin{subcase}{case}
\newtheorem{cor}{Corollary}[section]
\newtheorem{lem}{Lemma}[section]

\newtheorem{rem}{Remark}[section]

\newtheorem{defn}{Definition}[section]

\theoremstyle{definition}
\newcounter {own}
\def\theown {\thesection  .\arabic{own}}

{\qed\bigskip}
\DeclareMathOperator*{\sinc}{sinc}

\newcounter{alphabet}

\newcommand{\ds}{\displaystyle}
%\newcommand{\psubset}{\subsetneq}

%=====================================================================
\newcounter{minutes}\setcounter{minutes}{\time}
\divide\time by 60
\newcounter{hours}\setcounter{hours}{\time}
\multiply\time by 60 \addtocounter{minutes}{-\time}
%=====================================================================

\begin{document}
\thispagestyle{plain}
\bibliographystyle{amsplain}
\title{Signal Prediction by  Derivative Samples from the Past via Perfect Reconstruction}

%=========================================================================
\thanks{%$^\dagger$
File:~AntonySreyaRiya.tex,
          printed: \today,
          \thehours.\ifnum\theminutes<10{0}\fi\theminutes}
%=========================================================================
\author{SREYA T}
\address{Sreya T, Indian Institute of Technology Dhanbad, Dhanbad-826 004, India.}
\email{sreyasoman8862@gmail.com}
\author{RIYA GHOSH}
\address{Riya Ghosh, Indian Institute of Technology Bombay, Mumbai, Maharashtra-400 076, India.}
\email{riya74012@gmail.com}
\author{A. ANTONY SELVAN$^\dagger$}
\address{A. Antony Selvan, Indian Institute of Technology Dhanbad, Dhanbad-826 004, India.}
\email{antonyaans@gmail.com}

\subjclass[2020]{Primary  42C15, 94A20}
\keywords{Averaged modulus of smoothness, Block Laurent operators, B-splines, Derivative sampling, Shift-invariant spaces, Zak transform.\\
$^\dagger$ {\tt Corresponding author}
}

\begin{abstract}
This paper investigates signal prediction through the perfect reconstruction of signals from shift-invariant spaces using nonuniform samples of both the signal and its derivatives. The key advantage of derivative sampling is its ability to reduce the sampling rate.  We derive a sampling formula based on periodic nonuniform sampling (PNS) sets with derivatives in a shift-invariant space. We establish the necessary and sufficient conditions for such a set to form a complete interpolating sequence (CIS) of order $r-1$. This framework is then used to develop an efficient approximation scheme in a shift-invariant space generated by a compactly supported function. Building on this, we propose a prediction algorithm that reconstructs a signal from a finite number of past derivative samples using the derived perfect reconstruction formula. Finally, we validate our theoretical results through practical examples involving cubic splines and the Daubechies scaling function of order 3. 
\end{abstract}
\maketitle
\pagestyle{myheadings}
\markboth{Sreya T, Riya Ghosh and A. Antony Selvan}{Signal Prediction by  Derivative Samples from the Past via Perfect Reconstruction}
\section{Introduction}
The Whittaker-Kotel'nikov-Shannon (WKS) sampling theorem states that any bandlimited signal $f$ of bandwidth $\pi W$ can be  reconstructed from its samples 
$\{f(n/W):n\in\mathbb{Z}\}$ by the sampling formula
\begin{align}\label{shannon}
f(t)=\ds\sum_{n\in\mathbb{Z}}f\left(\frac{n}{W}\right)\sinc{(Wt-n)},~\sinc{t}=\dfrac{\sin \pi t}{\pi t}.
\end{align}
However, practical considerations reveal several limitations in the direct application of the WKS sampling theorem: 
\begin{itemize}
\item [($i$)]  The sinc kernel has infinite support and decays slowly, which makes the space of bandlimited signals often unsuitable for numerical implementations.
\item [($ii$)]  In reality, only a finite number of samples $N$ can be taken. 
Truncating the infinite series \eqref{shannon} to a finite sum introduces errors that decay slowly as $N\to\infty$ again due to the slow decay of the $\mathrm{sinc}$ kernel.
\item [($iii$)] Exact bandlimitation is a strong and often impractical assumption. In reality, no signal can be both strictly timelimited and bandlimited due to the uncertainty principle. Moreover, the WKS sampling series may fail to converge for all continuous signals.
\item [($iv$)]  The reconstruction formula requires samples from the entire time axis, both past and future, relative to a given point in time 
$t=t_0.$ However, in practical applications, only past values 
$f(t)$ for $t<t_0$ are available. This gives rise to the fundamental problem of prediction: Is it possible to reconstruct a bandlimited signal using only past samples?
\end{itemize}
These challenges motivate the exploration of alternative reconstruction techniques, especially those suited for efficient numerical implementation. 

A shift-invariant space  $V(\varphi)$ can be used as a universal model for sampling and interpolation problems because it can include a wide range of functions, whether bandlimited or not, by appropriately choosing a stable generator $\varphi.$ It is defined as 
$$V(\varphi):=\left\{f\in L^2(\mathbb{R}): f(\cdot)=\sum\limits_{k\in\mathbb{Z}}d_k\varphi(\cdot-k)~\text{for some}~ (d_k)\in \ell^2(\mathbb{Z})\right\}.$$
Recall that $\varphi$ is said to be a stable generator for $V(\varphi)$ if $\{\varphi(\cdot-k) : k \in \mathbb{Z}\}$ is a Riesz basis for $V(\varphi)$, \textit{i.e.},
$\overline{span}\{\varphi(\cdot-k): k\in\mathbb{Z}\}=V(\varphi)$ and there exist constants $A$, $B>0$ such that
\begin{equation}\label{rieszbasis}
A\sum_{k\in\mathbb{Z}}|d_k|^2\leq\big\|\sum_{k\in\mathbb{Z}}d_k\varphi(\cdot-k)\big\|_2^2\leq
B\sum_{k\in\mathbb{Z}}|d_k|^2,
\end{equation}
for all $(d_k)\in\ell^2(\mathbb{Z})$.    
It is well known that $\varphi$ is a stable generator for $V(\varphi)$ if and only if
\begin{equation}\label{eqn2.5}
 0<\|\Phi\|_0 \leq \|\Phi\|_{\infty} <\infty, 
\end{equation}
where $\|\Phi\|_0$ and $\|\Phi\|_\infty$ denote the essential infimum and supremum of the function
$\Phi(w):=\sum_{n\in\mathbb{Z}}|\widehat{\varphi}(w+n)|^2$ in the interval $[0,1],$ respectively. Here, $\widehat{\varphi}$ denotes the Fourier transform of $\varphi$, which is defined as
$$\widehat \varphi(w):= \int_{-\infty}^{\infty} \varphi(x) e^{-2\pi \mathrm{i}wx}~dx.$$
In addition,
if $\varphi$ is a continuously $(r-1)$-times differentiable function such that for some $\epsilon>0,$
$$\varphi^{(s)}(t)=\mathcal{O}(|t|^{-1-\epsilon})~\text{as}~t\to \pm\infty,~0\leq s\leq r-1,$$
then we say that $\varphi$ is a $(r-1)$-regular stable generator for $V(\varphi).$
In this case, every $f\in V(\varphi)$ is $(r-1)$-times differentiable and 
$$f^{(s)}(t)=\sum\limits_{k\in\mathbb{Z}}c_{k}\varphi^{(s)}(t-k),~0\leq s\leq r-1.$$

Shift-invariant spaces, initially introduced in approximation theory and wavelet theory, extend the concept of bandlimited signals. Sampling in shift-invariant spaces offers a practical model for a variety of applications, including the consideration of real-world acquisition and reconstruction devices, the modeling of signals with smoother spectra than bandlimited signals, and numerical implementations. For further details, we refer to \cite{AlGr} and the references therein.

Beyond sampling only the signal values, the concept of sampling both a signal and its derivatives is well established in communication engineering, information theory, and signal processing.  This technique offers the primary advantage of reducing the required sampling rate. Shannon first introduced this idea in his seminal 1949 work \cite{Shannon1949}.
To explain the problem of sampling with derivatives in $V(\varphi)$ more precisely, let us first introduce the following terminology.
\begin{defn}
Let $\varphi$ be a $(r-1)$-regular stable generator for $V(\varphi)$. Then a set $\Gamma=\{x_n: n\in\mathbb{Z}\}$ of real numbers is said to be 
\begin{enumerate}
\item[$(i)$] a set of stable sampling (SS) of order $r-1$ for $V(\varphi)$ if there exist constants $A, B>0$ such that $$A\|f\|_{2}^2\leq\sum\limits_{n\in\mathbb{Z}}\sum\limits_{i=0}^{r-1}|f^{(i)}(x_n)|^2\leq B\|f\|_{2}^2,$$
for all $f\in V(\varphi);$
\item[$(ii)$] a set of interpolation (IS) of order $r-1$ for $V(\varphi)$ if the interpolation problem $$f^{(i)}(x_n)=c_{ni}, n\in\mathbb{Z}, i= 0, 1, \dots, r-1,$$ has a solution $f\in V(\varphi)$ for every square summable sequence $\{c_{ni}: n\in\mathbb{Z}, i= 0, 1, \dots, r-1\}.$
\end{enumerate}
If $\Gamma$ is both  SS and IS of order $r-1$ for $V(\varphi),$  then it is called a complete interpolation set (CIS) of order $r-1$ for $V(\varphi).$
\end{defn}

Fogel \cite{Fogel}, Jagermann and Fogel \cite{JagermanFogel}, Linden and Abramson \cite{LindenAbramson} proposed the first work on sampling with derivative values in the space of bandlimited signals. They proved that $r\mathbb{Z}$ is a CIS of order $r-1$  for $V(\textrm{sinc})$.
The problem of sampling with derivatives in bandlimited signal spaces was initially studied in \cite{Grochenig1, Papoulis, Raza, Nathan1973} and more recently revisited with an emphasis on shift-invariant spaces in \cite{Adcock, Grochenig2, Selvan, Priyanka,Priyanka2025}.
The use of derivative-based sampling has found applications in various domains, including aircraft instrumentation, air traffic management simulations, wireless communication in minimizing energy consumption, and telemetry \cite{Fogel}.  In the mid-1980s, Corey and O'Neil demonstrated this concept through its application to planar near-field antenna measurements \cite{Corey1986}. The concept has also found use in the design of wideband fractional delay filters \cite{Tseng}. More recently, between 2022 and 2024, researchers have extended derivative sampling to spherical near-field antenna measurements, as reported in \cite{Kaslis2022, Kaslis2023a, Kaslis2023b, Kaslis2024, Kaslis2024b, Breinbjerg2024}.

Djokovic and Vaidyanathan \cite{Djokovic1997} explored generalized sampling schemes within the framework of shift-invariant spaces, including methods such as local averaging and derivative sampling.  Building on earlier work by Papoulis \cite{Papoulis} in the bandlimited setting, Unser and Zerubia \cite{UnZer2} addressed the challenge of identifying the conditions under which any function $f\in V(\varphi)$ can be reconstructed from such generalized samples.  Although both \cite{Djokovic1997} and \cite{UnZer2} established necessary and sufficient conditions for stable reconstruction, practical criteria for the existence of interpolating kernels and efficient methods for their computation remained unresolved. These challenges are further explored and addressed in recent developments, particularly in the works \cite{Hogan1} and \cite{Hogan2}.

A particularly useful class of sampling sequences, called periodic nonuniform sampling (PNS) sets, is given by 
$\{x_n+\rho l :~1\leq n\leq L,~l\in\mathbb{Z}\}$, with period $\rho$ and length $L$. The PNS pattern is determined by an offset vector $x=(x_0, x_1, \dots,x_{L-1}).$ Periodic nonuniform sampling, also known as bunched or interlaced sampling, dates back to the works of Kohlenberg \cite{Kohlenberg} and Yen \cite{yen}. 
PNS has found important applications in areas such as channel noise reduction \cite{Aragones} and the analysis of multiband signals \cite{Herley}. Nathan \cite{Nathan1973} investigated PNS methods that utilize both function values and their derivatives for reconstructing bandlimited signals. The authors of \cite{StrohmerTanner} proposed a method for reconstructing a bandlimited signal from periodically nonuniform samples, achieving root-exponential accuracy using only a finite number of samples. A key motivation for using PNS is its ability to achieve perfect reconstruction, meaning exact recovery of the original signal without distortion or information loss. In shift-invariant spaces, this requires both the generator and interpolating kernels to be compactly supported, ensuring a finite reconstruction sum with no truncation error.
The main challenge lies in selecting sampling sets that ensure this compact support. 
This contrasts with the classical bandlimited case, where the ideal sinc kernel has infinite support, and any practical reconstruction inevitably involves truncating an infinite sum, introducing unavoidable approximation errors. The authors in \cite{Djokovic1997} and \cite{UnZer2} established a fundamental link between filter bank theory and sampling in shift-invariant spaces, which is vital for achieving perfect reconstruction. In \cite{Garcia0}, Garcia et al. further explored an approximation scheme based on the perfect reconstruction process. Another important motivation for PNS is its relevance to the local reconstruction problem, which is closely related to perfect reconstruction and has been explored in the literature (see \cite{Sun2010, sun2, Vaidyanathan2002}). The necessary and sufficient conditions for perfect reconstruction, along with its classification, are explained in \cite{Martin, Martin2}.

The problem of perfect reconstruction using derivative samples has not been extensively explored in the literature. The first objective of this paper is to establish a sampling formula based on the PNS set with derivatives in the shift-invariant space $V(\varphi)$ and to demonstrate how this leads to perfect reconstruction. We provide a necessary and sufficient condition for the PNS set to be a CIS of order $r-1$ for $V(\varphi)$. Consequently, we show that every $f\in V(\varphi)$ can be reconstructed from  the following sampling formula:
\begin{align*}
f(t)=\sum_{n=0}^{L-1}\sum_{l\in\mathbb{Z}}\sum\limits_{i=0}^{r-1}f^{(i)}(x_n+ \rho l)\Theta_{ni}(t-\rho l),
\end{align*}
where the interpolating kernels $\Theta_{ni}$ can be computed explicitly. When the generator $\varphi$ is compactly supported, we prove that a suitable choice of the offset vector in the PNS framework enables the construction of compactly supported interpolating kernels that guarantee perfect reconstruction (Theorem \ref{theorem3.2}). 
Additionally, we use our main result to investigate the complete interpolation property of specific sampling sets for shift-invariant spline spaces, which are particularly valuable in practical applications due to the availability of efficient and robust algorithms for their computation.
While previous work \cite{Djokovic1997,BBSV2} addressed perfect reconstruction in spline spaces, it did not consider offset selection, derivative sampling, or other shift-invariant spaces like those generated by Daubechies scaling functions.
In this work, we extend these results to derivative sampling and arbitrary compactly supported generators, and we present a systematic method for selecting the offset vector in both classical and derivative sampling cases. This approach ensures that PNS based sampling schemes operate at minimal sampling rates, avoid truncation errors, and maintain low computational complexity, guaranteeing perfect reconstruction even when derivatives are involved. 

In general, signals do not naturally belong to a specific shift-invariant space. Nevertheless, as Lei et al. emphasized in \cite{Lei}, a variety of methods have been proposed in the literature to construct approximation schemes within such spaces. These methods include cardinal interpolation, quasi-interpolation, projection, and convolution (see \cite{DDR, Jia, JiaLei}). Moreover, in the works of Garcia et al. \cite{Garcia1, Garcia2, Garcia3}, the authors examined the problem of approximating functions in shift-invariant spaces using generalized sampling formulas. In these and related studies, the considered signals were typically assumed to be differentiable to a certain order, and the approximation error was analyzed using the classical modulus of smoothness. However, real-world signals are often not continuous and may feature jump discontinuities, such as those caused by shocks from flying missiles, seismic shocks, or extrasystoles in heartbeats. To address this, the classical modulus of smoothness, which is typically used for continuous functions in $C(\mathbb{R})$, needs to be replaced with a modulus of smoothness that can handle discontinuous signals.  The Bulgarian school, under the guidance of Sendov \cite{Sendov}, introduced the concept of an averaged modulus of smoothness, which provides a means of estimating aliasing errors for functions defined on an interval $[a,b]$ that may not necessarily be differentiable. In \cite{BBSV1}, the authors extended this concept for functions defined on the real axis. Consequently, the authors \cite{BBSV1,BBSV2} investigated the rate of approximation of a signal $f$ by the following sampling operator
\begin{align*}
   S_Wf(t)=\ds\sum_{n\in\mathbb{Z}}f\left(\frac{n}{W}\right)\varphi(n-Wt) 
\end{align*}
based on the averaged modulus of smoothness $\tau_r(f;W^{-1})_p$ instead of  the classical modulus of continuity. More recently, the authors in \cite{Selvan1} studied the convergence behavior of sampling expansions in shift-invariant spaces generated by a Gaussian function. Inspired by the work in \cite{BBSV1, BBSV2} and the practical purpose of derivative sampling, the authors \cite{Priyanka} investigated the rate of approximation of a signal $f$ (not necessarily continuous) by the derivative sampling series based on spline generators.

The second aim of this paper is to demonstrate how the proposed sampling formula based on the periodic nonuniform samples of a function in the space $V(\varphi)$ along with its derivatives can be effectively utilized to develop an effective approximation scheme. In particular, we construct an approximation scheme using a derivative sampling formula based on PNS in a shift-invariant space generated by a compactly supported generator $\varphi$.
To this end, we define an approximation operator $S_W$ for a suitable real-valued function $f$ using the following sampling series:
\begin{align}
    (S_Wf)(t):= \sum\limits_{l\in\mathbb{Z}}\sum\limits_{i=0}^{r-1}\sum\limits_{n=0}^{L-1}\frac{1}{W^i}f^{(i)}\left(\frac{x_n+\rho l}{W}\right)\Theta_{ni}(Wt-\rho l).
\end{align}

In this paper, we investigate the rate of convergence of the sampling series $S_Wf$ to the approximation of a signal $f$, which may not necessarily be continuous (Theorem \ref{approximation operator}).  We achieve this by analyzing the reproducing polynomial property of the sampling series, which plays a critical role in determining the accuracy of the approximation. 

As mentioned earlier, the WKS sampling formula requires samples from both the past and future, but in practice, only a finite number of past samples are available, raising the question of whether a signal can be reconstructed from these limited past samples. The problem of predicting a signal based on its past samples traces its origins to the work of L. A. Wainstein and V. D. Zubakov \cite{Wainstein} and J. L. Brown \cite{Brown}. This area of research was subsequently advanced by Splettst\"o{\ss}er, who made significant contributions to the theoretical framework and practical methodologies of signal prediction. For further insights and developments in this field, particularly those building on Splettst\"o{\ss}er's work, the reader is encouraged to consult the publications by Mugler and Splettst\"o{\ss}er \cite{Mugler1, Mugler2, Mugler3, Splettstoeser1, Splettstoeser2}, which provide detailed discussions and extensions of the early models. Later, Butzer and Stens addressed the prediction problem, adding valuable theoretical insights and practical applications, as discussed in their work \cite{Butzer2, Butzer3, Butzer4}. For a more comprehensive overview of the prediction problem, the extensive review in \cite{Butzer4} provides valuable insights for readers.

The main goal of this paper is to explore how to predict a signal from a finite number of past derivative samples. We demonstrate that our perfect reconstruction formula enables the prediction of the signal from these limited past samples. By appropriately shifting the support of the interpolating kernels in the reconstruction formula, we can effectively predict the signal from the available finite past derivative samples. Based on this idea, we propose a prediction algorithm derived from our results to offer an efficient method for predicting signals from a finite number of past samples. Finally, we validate our theoretical results by applying the method to cubic splines and the Daubechies scaling function of order 3. Simulations demonstrate the effectiveness of the proposed prediction algorithm when applied to these specific generators.

The paper is organized as follows. Section 2 introduces the Zak transform, block Laurent operators, and the average modulus of smoothness—fundamental concepts in the theory of periodic nonuniform sampling in shift-invariant spaces. Section 3 focuses on perfect reconstruction based on periodic nonuniform sampling with derivatives in these spaces. Section 4 explores how the perfect reconstruction formula leads to a prediction algorithm that estimates a signal from its finite number of past derivative samples. Finally, detailed proofs of the main results are presented in the Appendix to ensure readability and maintain the flow of the main exposition.

\section{Preliminaries}
 In this section, we introduce the Zak transform and block Laurent operators, which play a crucial role in characterizing the sampling and interpolation of periodic non-uniform samples in shift-invariant spaces. We also present the averaged modulus of smoothness, which is used to estimate the approximation error for discontinuous signals.
\subsection{\texorpdfstring{\underline{The Zak Transform}}{The Zak Transform}}

For a given parameter $\alpha>0,$ the Zak transform $\mathcal{Z}_\alpha f$ of a function $f$ on $\mathbb{R}^2$ is defined by 
$$\mathcal{Z}_\alpha f(x,y)=\sum\limits_{k\in\mathcal{Z}}f(x-\alpha k)e^{2\pi \mathrm{i}\alpha ky}.$$ 
When $\alpha=1$, we use the notation  $\mathcal{Z}$ instead of $\mathcal{Z}_1.$
A continuous function $f$ belongs to the Wiener space $W(\mathbb{R})$ if $$\|f\|_{W(\mathbb{R})}:=\sum_{n\in\mathbb{Z}}\max\limits_{x\in[0,1]}|f(x+n)|<\infty.$$
If $f$ and $\widehat{f}$ belong to $W(\mathbb{R})$, then the Poisson summation formula \cite{time} 
\begin{align}\label{poissonsum}
\sum_{n\in\mathbb{Z}}f(x+\nu n)=\dfrac{1}{\nu}\sum_{n\in\mathbb{Z}}\widehat{f}\left(\dfrac{n}{\nu}\right)e^{2\pi \mathrm{i} nx/\nu},~\nu>0,
\end{align}
holds for all $x\in\mathbb{R}$ with absolute convergence of both sums. As a consequence, the Zak transform of a function $f$ and its Fourier transform $\widehat{f}$ are related by
\begin{align}\label{zak}
\mathcal{Z}_\alpha f(x,y)=\tfrac{e^{2\pi \mathrm{i}xy}}{\alpha}\mathcal{Z}_{\tfrac{1}{\alpha}}\widehat{f}(y,-x),~\text{for}~x,y\in\mathbb{R}.
\end{align}
If $f\in W(\mathbb{R})$, then $\mathcal{Z}_{\alpha} f$ is continuous on $\mathbb{R}^2$.
We refer to \cite{time} for other properties of the Zak transform.
\subsection{\texorpdfstring{\underline{Block Laurent Operators}}{Block Laurent Operators}}

Let $\mathcal{A}$ be a bounded linear operator on $\ell^2(\mathbb{Z})$ with
the associated matrix given by $\mathcal{A}=[a_{rs}]$. Then $\mathcal{A}$ is said to be a \textit{Laurent operator} if the matrix entries satisfy the condition $a_{r-k,s-k}=a_{rs}$, for every $r$, $s$, $k$
$\in\mathbb{Z}$. For $\varphi\in L^\infty(\mathbb{T})$, let us define $\mathcal{M}:L^2(\mathbb{T})\to
L^2(\mathbb{T})$ by 
\[(\mathcal{M}f)(x):=\varphi(x)f(x), ~\text{for}~a.e.~x\in\mathbb{T},\]
where $\mathbb{T}$ is the circle group identified with the unit interval  $[0,1)$. Then the operator $\mathcal{A}=\mathcal{F}\mathcal{M}\mathcal{F}^{-1}$ is a Laurent operator defined by the matrix $[\widehat{\varphi}(r-s)]$, where $\mathcal{F}:L^2(\mathbb{T})\to \ell^2(\mathbb{Z})$ is the Fourier transform defined by 
$$\mathcal{F}f=\widehat{f},~\widehat{f}(n)=\int_{0}^{1} f(x) e^{-2\pi \mathrm{i}n x}dx,~n\in\mathbb{Z}.$$ 
In this case, we say that $\mathcal{A}$ is a Laurent operator defined by the symbol $\varphi$. If it is invertible, then $\mathcal{A}^{-1}$ is also a Laurent operator with the symbol $\frac{1}{\varphi}$ and the matrix of $\mathcal{A}^{-1}$ is given by
$$\mathcal{A}^{-1}=\left[\widehat{\left(\tfrac{1}{\varphi}\right)}(r-s)\right].$$
Let $\mathcal{L}$ be a bounded linear operator on $\ell^2(\mathbb{Z})^m$ which is represented by a double infinite matrix whose entries are $m\times m$ matrices, \textit{i.e.},
$$L=[A_{rs}],~ A_{rs} ~\text{is an}~ m\times m ~ \text{matrix.}$$
Since $\ell^2(\mathbb{Z})^m$ is a direct sum of $m$ copies of $\ell^2(\mathbb{Z})$, an operator $\mathcal{L}$ on $\ell^2(\mathbb{Z})^m$ may also be represented by an $m\times m$ matrix whose entries are doubly infinite matrices, \textit{i.e.},
\begin{align*}
 \mathcal{L}=\left(
\begin{array}{ccccccc}
 L_{11} & L_{12}  & \cdots & L_{1m} \\
L_{21} & L_{22}  &  \cdots & L_{2m} \\
\vdots & \vdots  & \ddots & \vdots\\
L_{m1} & L_{m2}  & \cdots & L_{mm}  
 \end{array}
\right), ~ L_{rs} ~\text{is a doubly infinite matrix}.
\end{align*}
If $\mathcal{L}$ is a block Laurent operator on $\ell^2(\mathbb{Z})^m$ with the above two matrix representations, then
$$L_{rs}=[A_{i-j}^{rs}]_{i,j\in\mathbb{Z}},$$
where $A_n^{rs}$ is the $(r,s)$-th entry of the matrix $A_n=A_{n,0}$ with respect to the standard orthonormal basis of $\mathbb{C}^m$. 

For an $m\times m$ matrix-valued function $\Phi \in L_{m\times m}^1(\mathbb{T})$, the Fourier coefficient $\widehat{\Phi}(k)$ of $\Phi$ is an $m\times m$ matrix is defined as 
$$ \widehat \Phi (k):= \int\limits_{\mathbb{T}}  e^{-2\pi \mathrm{i} k  x}\Phi(x)~dx, ~ k \in \mathbb{Z},$$
where $(i,j)$-th entry is equal to the Fourier coefficient of the $(i,j)$-th entry of $\Phi$. It is well known that for every block Laurent operator $\mathcal{L}$, there exists a function $\Phi \in L^{\infty}_{m\times m}(\mathbb{T})$ such that $\mathcal{L}=[\widehat{\Phi}(i-j)]$ (see \cite[p. 563-566]{GoGoKa1}). In this case,  $\mathcal{L}$ is called a block Laurent operator defined by $\Phi$. 
\begin{thm}\label{IB} \cite{GoGoKa1}.
Let $\mathcal{L}=[L_{r,s}]=[\widehat{\Phi}(i-j)]$ denote the matrix of a function $\Phi\in L^{\infty}_{m\times m}(\mathbb{T})$. Then $\mathcal{L}$ is an invertible block Laurent operator with the symbol $\Phi\in L^\infty_{m\times m}(\mathbb{T})$ if and only if there exist  two positive constants $A$ and $B$ such that
\begin{equation}\label{Invertibility condition}
 A\leq |\det\Phi(x)|\leq B,\text{ for a.e. }x\in\mathbb{T}.
\end{equation}
\end{thm}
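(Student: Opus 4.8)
The plan is to realize the block Laurent operator $\mathcal{L}$ as a matrix multiplication operator on the vector-valued space $L^2(\mathbb{T};\mathbb{C}^m)$ and thereby reduce invertibility to a pointwise condition on $\Phi$. Identifying $\ell^2(\mathbb{Z})^m$ with $L^2(\mathbb{T};\mathbb{C}^m)$ via the componentwise Fourier transform $\mathcal{F}$ and computing against the orthonormal basis $\{e^{2\pi \mathrm{i} n x} e_j : n\in\mathbb{Z},\, 1\le j\le m\}$, one sees that the operator $M_\Phi : f(x)\mapsto \Phi(x)f(x)$ has exactly the block matrix $[\widehat{\Phi}(i-j)]$. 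Hence $\mathcal{L}=\mathcal{F}^{-1}M_\Phi\mathcal{F}$, so $\mathcal{L}$ is invertible on $\ell^2(\mathbb{Z})^m$ if and only if $M_\Phi$ is invertible on $L^2(\mathbb{T};\mathbb{C}^m)$; note that $\|M_\Phi\|=\operatorname{ess\,sup}_x\|\Phi(x)\|\le\|\Phi\|_\infty=:C<\infty$ automatically.

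For $(\Leftarrow)$, assume $A\le|\det\Phi(x)|\le B$ a.e. Since each entry of $\Phi(x)$ is bounded by $C$ a.e., Cramer's rule gives $\Phi(x)^{-1}=|\det\Phi(x)|^{-1}\operatorname{adj}\Phi(x)$ with $\|\operatorname{adj}\Phi(x)\|\le m!\,C^{m-1}$, so $\|\Phi(x)^{-1}\|\le m!\,C^{m-1}/A$ for a.e. $x$. Thus $\Psi:=\Phi^{-1}\in L^\infty_{m\times m}(\mathbb{T})$, the operator $M_\Psi$ is bounded, and $M_\Psi M_\Phi=M_\Phi M_\Psi=I$ because $\Psi(x)\Phi(x)=\Phi(x)\Psi(x)=I_m$ a.e.; transporting back through $\mathcal{F}$ shows $\mathcal{L}$ is invertible and $\mathcal{L}^{-1}=[\widehat{\Psi}(i-j)]$ is again a block Laurent operator, now with symbol $\Phi^{-1}$.

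For $(\Rightarrow)$, suppose $M_\Phi$ is invertible; then $\|M_\Phi g\|^2\ge c^2\|g\|^2$ for all $g\in L^2(\mathbb{T};\mathbb{C}^m)$ and some $c>0$, i.e. $\int_{\mathbb{T}}\|\Phi(x)g(x)\|^2\,dx\ge c^2\int_{\mathbb{T}}\|g(x)\|^2\,dx$. Testing with $g=\chi_E\, v$ for an arbitrary measurable $E\subseteq\mathbb{T}$ and a fixed $v\in\mathbb{C}^m$ forces $\|\Phi(x)v\|\ge c\|v\|$ for a.e. $x$; intersecting the exceptional null sets over a countable dense set of $v$'s shows that for a.e. $x$ the matrix $\Phi(x)$ is injective, hence invertible (being square), with $\|\Phi(x)^{-1}\|\le 1/c$. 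Then $|\det\Phi(x)|^{-1}=|\det\Phi(x)^{-1}|\le\|\Phi(x)^{-1}\|^m\le c^{-m}$, giving the lower bound $A:=c^m$, while $|\det\Phi(x)|\le\|\Phi(x)\|^m\le C^m=:B$ gives the upper bound. (Alternatively: since $M_\Phi$ commutes with multiplication by $e^{2\pi \mathrm{i} x}I_m$, so does $M_\Phi^{-1}$, and an operator on $L^2(\mathbb{T};\mathbb{C}^m)$ commuting with this block shift must itself be a matrix multiplier $M_\Psi$ with $\Psi\in L^\infty_{m\times m}(\mathbb{T})$; then $\Psi\Phi=I_m$ a.e. and one concludes as above.)

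The only genuinely delicate point is the passage from the operator inequality ``$M_\Phi$ is bounded below'' to the pointwise statement ``$\Phi(x)$ is uniformly invertible for a.e. $x$'': this needs the localization step (testing against indicators $\chi_E v$ and sweeping $v$ through a countable dense set to obtain a single null set), together with the elementary but crucial fact that for square matrices pointwise injectivity already yields pointwise invertibility with a uniform bound on the inverse. Everything else—the Fourier identification of block Laurent operators with matrix multipliers, Cramer's rule, and the equivalence of invertibility for $\mathcal{L}$ and $M_\Phi$—is routine once this step is secured.
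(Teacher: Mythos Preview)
Your proof is correct. Note, however, that the paper does not actually prove this theorem: it is simply quoted from the reference \cite{GoGoKa1} (Gohberg--Goldberg--Kaashoek) and used as a tool. Your argument---conjugating the block Laurent operator to the matrix multiplication operator $M_\Phi$ on $L^2(\mathbb{T};\mathbb{C}^m)$ via the componentwise Fourier transform, then handling the two implications by Cramer's rule and by a localization test with $g=\chi_E v$---is exactly the standard route one finds in that reference and in most operator-theory treatments of block Toeplitz/Laurent operators. The only remark worth making is that your claim $\|M_\Phi\|=\operatorname{ess\,sup}_x\|\Phi(x)\|$ deserves a word of justification (it follows from the same localization trick you use later), and the constant in the Cramer bound $\|\operatorname{adj}\Phi(x)\|\le m!\,C^{m-1}$ is a bit loose but harmless. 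Otherwise there is nothing to compare: the paper offers no proof of its own.
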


\subsection{\texorpdfstring{\underline{Averaged Modulus of Smoothness}}{Averaged Modulus of Smoothness}}
Let $M(\mathbb{R})$ denote the space of all bounded measurable functions on $\mathbb{R}$, and let $AC_{loc}^{(r)}$ denote the space of all $r$-fold locally absolutely continuous functions on $\mathbb{R}$.
The Sobolev spaces $W^r_p\equiv W^r(L^p(\mathbb{R})),$ $r\in\mathbb{N},$ $1\leq p<\infty,$ are given by
$$W_p^r:=\left\{f\in L^p(\mathbb{R}):  f\in AC^{(r)}_{loc}(\mathbb{R}), f^{(r)}\in L^p(\mathbb{R})\right\}.$$
Let $$(\Delta_h^rf)(t):=\sum\limits_{j=0}^r(-1)^{r-j} \binom{r}{j} f(t+jh)$$ be the classical finite forward difference of order $r\in\mathbb{N}$ of $f$ with increment $h$ at the point $t$. For a positive integer $r$, the 
$r$-th modulus of smoothness of a function $f\in L^p(\mathbb{R})$ is defined as 
$$\omega_r(f;\delta)_p:=\sup\limits_{0\leq h\leq\delta}\|\Delta_h^rf\|_p, ~\delta\geq 0.$$
Here and throughout this paper, $\|\cdot\|_p$ denotes the $L^p$-norm on the real line.
For $f\in M(\mathbb{R})$, the local modulus of smoothness of order $r\in\mathbb{N}$ at the point $x\in\mathbb{R}$
is defined  for $\delta\geq 0$ by
$$\omega_r(f,x;\delta):=\sup\left\{|\Delta_h^rf(t)|:t,t+rh\in\left[x-\dfrac{r\delta}{2},x+\dfrac{r\delta}{2}\right]\right\}.$$
For $f\in M(\mathbb{R})$, $1\leq p<\infty,$ the $L^p$-averaged modulus of smoothness of order $r\in\mathbb{N}$ is defined by
$$\tau_r(f;\delta)_p:=\|\omega_r(f,\bullet;\delta)\|_p,~\delta>0.$$

We now introduce the space $\Lambda^p_{r}$ for $1\leq p<\infty$, which includes Sobolev spaces as well as some discontinuous functions.
For fixed  $ \rho$ and $L$, we consider the sampling set $$X = \left\{x_n + \rho l:~0\leq n\leq L-1,~l\in\mathbb{Z} \right\}.$$
Let $f$ be a  $(r-1)$-times differentiable function on $\frac{1}{W} X$.  Then the discrete $\ell^p_{r}(W)$-norm of $f$ is defined for $1\leq p<\infty$ by
\begin{equation}\label{lpsigma}
\|f\|_{\ell^p_{r}(W)}
:= \left( \sum_{l \in \mathbb{Z}} \sum_{n=0}^{L-1} \sum_{i=0}^{r-1} 
\left|f^{(i)}\left( \frac{x_n + \rho l}{W} \right)\right|^p \frac{1}{W} \right)^{1/p}.
\end{equation}
The space $\Lambda^p_{r}$ is defined by
$$
\Lambda^p_{r} := \left\{ f \in M(\mathbb{R}) : \|f\|_{\ell^p_{r}(W)}
<\infty, ~\text{for every} ~ W \geq 1 \right\}.
$$
This space was introduced in \cite{Priyanka} for the general sampling set $X$. Arguing as in  \cite{BBSV1}, we can prove that $W_p^r \subset\Lambda^p_{r}$.
We now cite a particular case of the basic interpolation theorem for the averaged modulus of smoothness from \cite{Priyanka}.
\begin{thm}\label{interpolation theorem}\cite{Priyanka}.
Let $(L_W)_{W \geq 1}$ be a family of linear operators mapping from $\Lambda^p_{r}$ into $L^p(\mathbb{R})$, $1\leq p<\infty,$ satisfying the following properties:
\begin{itemize}
\item [$(i)$] $\|L_Wf\|_p\leq K_1\|f\|_{\ell^p_{r}(W)}$, for every $f\in \Lambda^p_{r}$, $W\geq 1$,
\item [$(ii)$] $\|L_Wg-g\|_p\leq K_2W^{-r}\|g^{(r)}\|_p$, for every $g\in W_p^r$, $W\geq 1$,
\end{itemize}
for the constants $K_1$, $K_2$ depending only on $r$. Then for each $f\in\Lambda^p_{r}$, there holds the error estimate 
\begin{equation}\label{interpolationthm}
\|L_{W}f-f\|_p\leq \sum\limits_{i=0}^{r-1}c_i\tau_r\left(f^{(i)};\dfrac{1}{W}\right)_p, ~ \text{for every}~ W\geq 1,
\end{equation}
where the constants $c_i,$ $i=0,1,\dots, r-1$  depend only on $K_1$ and $K_2.$ 
\end{thm}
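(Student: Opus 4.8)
The plan is to treat this as an abstract interpolation principle and run the classical Peetre $K$-functional argument, in the spirit of \cite{BBSV1}: given $f\in\Lambda^p_r$ and $W\ge 1$, split $f$ into a smooth surrogate plus a remainder, apply hypothesis $(ii)$ to the surrogate and hypothesis $(i)$ to the remainder, and recombine. Concretely, I would set $h:=1/W$ and take $g=g_h:=f*\psi_h$ with $\psi_h(u)=h^{-1}\psi(u/h)$, where $\psi$ is a fixed compactly supported $C^{r-1}$ kernel with $\int\psi=1$ and vanishing moments $\int u^k\psi(u)\,du=0$ for $1\le k\le r-1$ (a suitably normalised iterated average / linear combination of dilated B-splines). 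I would then invoke the standard properties of this Steklov-type mean: $g_h\in W^r_p$,
\[
\|f-g_h\|_p\le c\,\omega_r(f;h)_p,\qquad \|g_h^{(r)}\|_p\le c\,h^{-r}\,\omega_r(f;h)_p,
\]
and, by replacing $f$ near a point $\xi$ with its degree-$(r-1)$ Taylor polynomial $P_\xi$ — which $\psi_h$ reproduces, so that $(P_\xi*\psi_h)^{(i)}(\xi)=P_\xi^{(i)}(\xi)=f^{(i)}(\xi)$ for $0\le i\le r-1$ — and estimating the remainder $(f-P_\xi)*\psi_h^{(i)}$, the pointwise local bound
\[
\bigl|f^{(i)}(\xi)-g_h^{(i)}(\xi)\bigr|\le c\,\omega_r\!\bigl(f^{(i)},\xi;\,c\,h\bigr),\qquad 0\le i\le r-1.
\]

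Next, since $g_h\in W^r_p\subset\Lambda^p_r$, the remainder $f-g_h$ lies in $\Lambda^p_r$, so the triangle inequality together with $(i)$ and $(ii)$ would give
\[
\|L_Wf-f\|_p\le\|L_W(f-g_h)\|_p+\|L_Wg_h-g_h\|_p+\|g_h-f\|_p\le K_1\|f-g_h\|_{\ell^p_r(W)}+K_2W^{-r}\|g_h^{(r)}\|_p+\|g_h-f\|_p.
\]
The last two summands are $\le c\,\omega_r(f;h)_p=c\,\omega_r(f^{(0)};h)_p\le c\,\tau_r(f^{(0)};h)_p$ (using $W^{-r}h^{-r}=1$), which is exactly the $i=0$ term. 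For the first summand I would convert the discrete norm into the averaged modulus: by the pointwise local bound,
\[
\|f-g_h\|_{\ell^p_r(W)}^p=\sum_{i=0}^{r-1}\frac1W\sum_{l\in\mathbb Z}\sum_{n=0}^{L-1}\bigl|f^{(i)}-g_h^{(i)}\bigr|^p\!\Bigl(\tfrac{x_n+\rho l}{W}\Bigr)\le c\sum_{i=0}^{r-1}\frac1W\sum_{l,n}\omega_r\!\bigl(f^{(i)},\tfrac{x_n+\rho l}{W};ch\bigr)^p,
\]
and since the grid $\{(x_n+\rho l)/W\}$ has consecutive gaps bounded above by $\rho h$ and below by $\delta_0 h$ (with $\delta_0>0$ the smallest gap among the fixed offsets inside one period), while $\omega_r(f^{(i)},\cdot;c'h)$ with $c'$ slightly larger than $c$ dominates, up to a constant, its values on neighbouring grid points, each weighted sum is a Riemann-type sum bounded by $c\int_{\mathbb R}\omega_r(f^{(i)},x;c'h)^p\,dx=c\,\tau_r(f^{(i)};c'h)_p^p$. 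The dilation inequality $\tau_r(f^{(i)};c'h)_p\le c\,\tau_r(f^{(i)};h)_p$ then yields $\|f-g_h\|_{\ell^p_r(W)}\le c\sum_{i=0}^{r-1}\tau_r(f^{(i)};1/W)_p$, and assembling the three estimates gives $\|L_Wf-f\|_p\le\sum_{i=0}^{r-1}c_i\,\tau_r(f^{(i)};1/W)_p$ with constants depending only on $K_1,K_2,r$ and the (fixed) PNS data.

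The routine part is the $K$-functional bookkeeping and the classical Steklov-mean and $\tau_r$-dilation inequalities. I expect the main obstacle to be twofold: establishing the pointwise local estimate $|f^{(i)}(\xi)-g_h^{(i)}(\xi)|\le c\,\omega_r(f^{(i)},\xi;ch)$ under only the differentiability available for $\Lambda^p_r$-functions, which calls for a careful Whitney/Taylor-type comparison of $f$ at a grid point with the polynomial-reproducing kernel $\psi_h$; and verifying that the resulting Riemann sum over the non-uniform, $(\rho/W)$-periodic grid is genuinely controlled, uniformly in $W\ge1$, by the integral defining $\tau_r(f^{(i)};\cdot)_p$ — precisely where the two-sided bound on the sampling gaps enters.
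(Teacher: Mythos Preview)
The paper does not prove this statement; Theorem~\ref{interpolation theorem} is quoted from \cite{Priyanka} and used as a black box in the proof of Theorem~\ref{approximation operator} in the Appendix. There is therefore no in-paper argument to compare your attempt against.

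That said, your outline is the standard $K$-functional/Steklov-mean route that underlies \cite{BBSV1} (the case $r=1$, $L=1$) and, by the paper's own remarks, the extension in \cite{Priyanka}; the three-term split via hypotheses $(i)$ and $(ii)$ and the Riemann-sum comparison of $\|f-g_h\|_{\ell^p_r(W)}$ with $\tau_r(f^{(i)};h)_p$ are exactly what one expects. One simplification worth noting: instead of passing through a Taylor polynomial $P_\xi$ to obtain the pointwise bound $|f^{(i)}(\xi)-g_h^{(i)}(\xi)|\le c\,\omega_r(f^{(i)},\xi;ch)$, choose $\psi$ to be the usual $r$-fold iterated average (or a linear combination thereof) so that $f-f*\psi_h$ is literally an integral mean of $r$-th differences $\Delta^r_u f$; since $g_h^{(i)}=(f*\psi_h)^{(i)}=f^{(i)}*\psi_h$ for $0\le i\le r-1$, the same identity applied to $f^{(i)}$ gives the local bound directly. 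This avoids the Whitney/Taylor comparison you flag as the main obstacle, which is indeed delicate when $f^{(r-1)}$ is merely bounded measurable as $\Lambda^p_r$ permits.
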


\section{Perfect Reconstruction via Periodic Nonuniform Sampling with Derivatives}
Let $\mathcal{A}_r$ denote the class of continuously $(r-1)$-times differentiable functions such that for some $\epsilon>0$ and $C >0$,
\begin{equation}\label{decayproperty}
   |\varphi^{(s)}(t)|\leq C  (1+|t|)^{-1-\epsilon}\text{ and }
 |\widehat{\varphi^{(s)}}(\omega)|\leq C  (1+|\omega|)^{-1-\epsilon},
 ~0\leq s\leq r-1. 
\end{equation}
If $\varphi\in \mathcal{A}_r$ is a stable generator for $V(\varphi),$ then every $f\in V(\varphi)$ is $(r-1)$-times differentiable and 
\begin{align*}
f^{(p)}(t)=\sum_{k\in\mathbb{Z}}c_{k} \varphi^{(p)}(t-k)=\sum\limits_{k\in \mathbb{Z}}\sum\limits_{q=0}^{\rho-1} c_{\rho k+q}\varphi^{(p)}(t-\rho k-q), ~p=0,1,\dots,r-1.
\end{align*}

Let us consider a PNS set $X$ of period $\rho$ with length $L$, given by
$$X=\{x_n+\rho\ell:~0\leq n\leq L-1,~l\in\mathbb{Z}\}.$$  Associated with this sampling set, we define the following  infinite system 
\begin{align}\label{Eqn2.4}
f^{(p)}(x_n +\rho l)=\sum\limits_{k\in \mathbb{Z}}\sum\limits_{q=0}^{\rho-1} c_{\rho k+q}\varphi^{(p)}(x_n +\rho(l-k)-q),
\end{align}
for $l\in \mathbb{Z},~n = 0, 1, \dots, L-1 ,~p=0,1,\dots,r-1$. To write this system compactly, let us define
\begin{align}
U &=
\begin{bmatrix}
U_{0}^{00} & U_{0}^{01} & \cdots & U_{0}^{0,\rho-1}\\
U_{0}^{10} & U_{0}^{11} & \cdots & U_{0}^{1,\rho-1}\\
\vdots & \vdots & \ddots & \vdots\\
U_{0}^{(r-1),0} & U_{0}^{(r-1),1} & \cdots & U_{0}^{(r-1),(\rho-1)}\\
\hline
\vdots & \vdots & \ddots & \vdots\\
\hline
U_{L-1}^{00} & U_{L-1}^{01} & \cdots & U_{L-1}^{0,\rho-1}\\
U_{L-1}^{10} & U_{L-1}^{11} & \cdots & U_{L-1}^{1,\rho-1}\\
\vdots & \vdots & \ddots & \vdots\\
U_{L-1}^{(r-1),0} & U_{L-1}^{(r-1),1} & \cdots & U_{L-1}^{(r-1),(\rho-1)}
\end{bmatrix}, \nonumber 
 C = 
\begin{bmatrix}
C_0\\
C_1\\
C_2\\
\vdots\\
C_{\rho-1}\\
\end{bmatrix},~
\text{and}~
F =
\begin{bmatrix}
F_0^{(0)}\\
F_0^{(1)}\\
\vdots\\
F_0^{(r-1)}\\
\hline
\vdots\\
\hline
F_{L-1}^{(0)}\\
F_{L-1}^{(1)}\\
\vdots\\
F_{L-1}^{(r-1)}
\end{bmatrix},
\end{align}
where $U_{n}^{pq}=[\varphi^{(p)}(x_n+\rho(l-k)-q]_{l,k\in \mathbb{Z}},~C_q=\{c_{\rho k+q}\}_{k\in \mathbb{Z}}^T~\text{and}~F_n^{(p)}=\{f^{(p)}(x_n+\rho l)\}_{l\in \mathbb{Z}}^T.$
Then the above system \eqref{Eqn2.4} can be written as 
\begin{align}
UC=F.
\end{align}
Notice that $U_{n}^{pq}$ is a Laurent operator with the symbol \begin{equation}\label{Psi1}
\Psi_{n}^{pq}(x)=\sum\limits_{k\in \mathbb{Z}}\varphi^{(p)}(x_n+\rho k-q)e^{2\pi \mathrm{i}kx},
\end{equation} 
The first condition in \eqref{decayproperty} ensures that the symbols $\Psi_n^{pq}$ are  continuous on the interval $[0,1]$ for all $n = 0,1,\dots, L-1$, $p=0,1,\dots,r-1,$ and $q=0,1,\dots,\rho-1$. Following the same argument as in \cite{Ghosh2023}, but replacing \textrm{sinc} generator by an arbitrary generator, it follows from Theorem \ref{IB} that the set $X$ is a CIS of order $r-1$ for $V(\varphi)$ if and only if 
$$\rho=L r \text{ and } \det \bm{\Psi}(x)\neq 0 \text{ for all } x\in[0,1],$$ 
where  $\bm{\Psi}(x)$, commonly known  as the polyphase matrix in signal processing, is defined by
\begin{align}\label{3.6}
\bm{\Psi}(x)=\begin{bmatrix}
\Psi_{0}^{00}(x) & \Psi_{0}^{01}(x) & \cdots& \Psi_{0}^{0,\rho-1}(x)\\
\Psi_{0}^{10}(x) & \Psi_{0}^{11}(x) & \cdots& \Psi_{0}^{1,\rho-1}(x)\\
\vdots & \vdots & \ddots & \vdots\\
\Psi_{0}^{r-1,0}(x) & \Psi_{0}^{r-1,1}(x) & \cdots& \Psi_{0}^{r-1,\rho-1}(x)\\
\hline
\vdots & \vdots & \ddots & \vdots\\
\hline
\Psi_{L-1}^{00}(x) & \Psi_{L-1}^{01}(x) & \cdots& \Psi_{L-1}^{0,\rho-1}(x)\\
\Psi_{L-1}^{10}(x) & \Psi_{L-1}^{11}(x) & \cdots& \Psi_{L-1}^{1,\rho-1}(x)\\
\vdots & \vdots & \ddots & \vdots\\
\Psi_{L-1}^{r-1,0}(x) & \Psi_{L-1}^{r-1,1}(x) & \cdots& \Psi_{L-1}^{r-1,\rho-1}(x)
\end{bmatrix}=\left[\Psi^{d_i, j}_{\alpha_i}(x)\right]_{i,j=0}^{\rho-1}, 
\end{align}
with $d_i= i \bmod r$ and $\alpha_i = \lfloor\frac{i}{r}\rfloor$.
Throughout this paper, $\lfloor x \rfloor$ and $\langle x\rangle$ denote the floor and fractional part of $x$, respectively.
The inverse of the matrix $U$ is given by $U^{-1}=[\widehat{\bm{\Psi}^{-1}}(i-j)]$ and hence every $f\in V(\varphi)$ can be written as
\begin{align}\label{eq 3.7}
f(t)=\sum_{n=0}^{L-1}\sum_{l\in\mathbb{Z}}\sum\limits_{i=0}^{r-1}f^{(i)}(x_n+ \rho l)\Theta_{ni}(t-\rho l),
\end{align}
where the interpolating (or synthesizing) kernels $\Theta_{ni}(t)$  are defined by
\begin{align}\label{shinp}
\Theta_{ni}(t)=\sum\limits_{\nu\in\mathbb{Z}}\sum\limits_{q=0}^{\rho-1}\widehat{(\bm{\Psi}^{-1})_{n}^{qi}}(\nu)\varphi(t-
\rho \nu-q),
\end{align}
for $n = 0, 1, \dots, L-1 $, $i=0,1,\dots,r-1.$ Here $\widehat{(\bm{\Psi}^{-1})_{n}^{qi}}(\nu)$ denotes the $(q, nr+i)$-th entry of $\widehat{\bm{\Psi}^{-1}}(\nu).$ 
Furthermore, the sampling bounds  are given by
$$A=\|\Phi\|^{-1}_0\sup\limits_{t\in[0,1]}\lambda_{min}({\bm{\Psi^{*}}(x)\bm{\Psi}(x))},~ B=\|\Phi\|_\infty\sup\limits_{t\in[0,1]}\lambda_{max}(\bm{\Psi}^{*}(x)\bm{\Psi}(x)),$$
with their derivation provided in the supplementary material.

By applying the Zak transform property \eqref{zak}, \eqref{Psi1} becomes 
\begin{align}{\label{psipsf}}
 \Psi_{n}^{pq}(x) &= \mathcal{Z}_{\rho}\varphi^{(p)}\left(x_n-q,\frac{-x}{\rho}\right)\nonumber \\
 &= \frac{1}{\rho} e^{\frac{-2 \pi \mathrm{i}}{\rho}(x_n-q)x} \mathcal{Z}_{\frac{1}{\rho}}  \widehat{\varphi^{(p)}}\left(\frac{-x}{\rho}, -x_n+q\right) \nonumber \\
 &= \frac{1}{\rho} e^{\frac{-2 \pi \mathrm{i}}{\rho}(x_n-q)x} \sum_{k \in \mathbb{Z}} \widehat{\varphi^{(p)}}\left( \frac{-x+ k}{\rho} \right) e^{\frac{2 \pi \mathrm{i}}{\rho}(x_n-q)k} \nonumber\\
&= \frac{1}{\rho} e^{-2\pi \mathrm{i} \left(\frac{x_n - q}{\rho}\right)x}\sum_{k \in \mathbb{Z}} \left(\frac{-2 \pi \mathrm{i}}{\rho}\right)^p (-x+k)^p \widehat{\varphi}\left( \frac{-x+ k}{\rho} \right) e^{2\pi \mathrm{i}\left( \frac{x_n - q}{\rho}\right)k} \nonumber\\
&=\frac{1}{\rho} e^{-2\pi \mathrm{i}  \left(\frac{x_n - q}{\rho}\right)x} \left(\tfrac{-2 \pi \mathrm{i}}{\rho}\right)^p\sum_{m \in \mathbb{Z}} \sum_{j=0}^{\rho-1}  (-x+\rho m+j)^p \widehat{\varphi}\left( \tfrac{-x+ \rho m+j}{\rho} \right)e^{2\pi \mathrm{i} \left(\frac{x_n-q}{\rho}\right) (\rho m +j)}\nonumber\\
&= \frac{1}{\rho} e^{-2\pi \mathrm{i}  \left( \frac{x_n - q}{\rho} \right)x} \left(\tfrac{-2 \pi i}{\rho}\right)^p  \sum_{j=0}^{\rho-1}  \sum_{m \in \mathbb{Z}} (-x+\rho m+j)^p\widehat{\varphi}\left(\tfrac{-x+ \rho m+j}{\rho}\right) e^{2\pi \mathrm{i} j \left(\frac{ x_n - q}{\rho} \right)} e^{2 \pi \mathrm{i} m x_n}. 
\end{align}
Let $y_j=\frac{-x +j}{\rho}$, $0 \leq j\leq \rho-1$. Then\\
$\left(\tfrac{-2 \pi \mathrm{i}}{\rho}\right)^p\sum_{m \in \mathbb{Z}}  (-x+\rho m+j)^p \widehat{\varphi}\left(\tfrac{-x+ \rho m+j}{\rho}\right)
 e^{2 \pi \mathrm{i} m x_n}$
\begin{align*} 
 &=\left(-2 \pi \mathrm{i}\right)^p \sum_{m \in \mathbb{Z}} 
 (y_j+m)^p\widehat{\varphi}\left(y_j+m\right) e^{2\pi \mathrm{i} m x_n} \nonumber\\
&=\mathcal{Z}{\widehat{\varphi^{(p)}}}\left(y_j, -x_n\right)
=e^{-2 \pi \mathrm{i} y_j x_n} \mathcal{Z}{\varphi^{(p)}}\left(x_n, y_j\right),
\end{align*}
from \eqref{zak}. Therefore, \eqref{psipsf} becomes
\begin{align}\label{eqn3.9}
\Psi_{n}^{pq}(x) 
&=\frac{1}{\rho}  \sum_{j=0}^{\rho-1} 
\mathcal{Z}{\varphi^{(p)}}\left(x_n, \frac{-x+j}{\rho}\right)
e^{\frac{2\pi \mathrm{i} (x_n-q)}{\rho} (-x + j)} e^{\frac{-2\pi \mathrm{i} x_n}{\rho} (-x + j)}\nonumber \\
&= \frac{1}{\rho}  \sum_{j=0}^{\rho-1} 
\mathcal{Z}{\varphi^{(p)}}\left(x_n, \frac{-x+j}{\rho}\right)
e^{\frac{-2\pi \mathrm{i} q}{\rho} (x - j)}.
\end{align}
Let $u_j(t;x)=\mathcal{Z}\varphi\left(t, \frac{-x+j}{\rho}\right),$ $j=0,1, \dots,\rho-1. $ Then from the definition of the Zak transform, we have 
$$D^{p}u_j(t;x)=\mathcal{Z}\varphi^{(p)}\left(t, \frac{-x+j}{\rho}\right), \text{ where }~~D^p\equiv\frac{\partial^p}{\partial t^p}.$$ 
Hence we can rewrite \eqref{eqn3.9} as 
$$\bm\Psi(x)= \frac{1}{\rho} G(x)B(x),$$
where $$G(x)=\left[D^{d_i}u_j(t_i;x)\right]_{i,j=0}^{\rho-1} \text{ with } d_i=i\bmod r \text{ and } t_i=x_{\lfloor\frac{i}{r}\rfloor};$$ $$B(x)=\left[e^{-2 \pi \mathrm{i} q\frac{(x-j)}{\rho}}\right]_{q,j=0}^{\rho-1}.$$
It is easy to verify that $\det B(x)\neq 0$ for all $x\in[0,1]$. 
Consequently, we obtain the following result.
\begin{thm}
Let $\varphi\in \mathcal{A}_r$ be a  stable generator for $V(\varphi)$ and let
$X=\{x_n+\rho \ell:~0\leq n\leq L-1,~l\in\mathbb{Z}\}$ be a PNS set of period $\rho$ with length $L$.
Define 
$$d_i=i\bmod r,t_i=x_{\lfloor\frac{i}{r}\rfloor} \text{ and }u_j(t;x)=\mathcal{Z}\varphi\left(t, \frac{-x+j}{\rho}\right),~~j=0,1, \dots,\rho-1. $$ 
If $\rho=Lr$, then $X$  is a CIS of order $r-1$ for $V(\varphi)$ if and only if
$$\det \left[D^{d_i}u_j(t_i;x)\right]_{i,j=0}^{\rho-1}\neq 0 \text{ for all x }\in [0,1], \text{ where } D^p\equiv\dfrac{\partial^p}{\partial t^p}.$$
\end{thm}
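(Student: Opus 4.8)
The plan is to recognize the theorem as a reformulation of the polyphase-matrix criterion already extracted in the discussion preceding it, so that the only real work is to push the factorization $\bm{\Psi}(x)=\tfrac1\rho G(x)B(x)$ through a determinant computation. The first step is the reduction to the polyphase matrix. Since $\varphi\in\mathcal A_r$, the first estimate in \eqref{decayproperty} makes every symbol $\Psi_n^{pq}$ continuous on $[0,1]$, hence $\det\bm{\Psi}$ is continuous there, and by compactness of $[0,1]$ the requirement that $\det\bm{\Psi}$ be bounded away from $0$ a.e.\ is equivalent to $\det\bm{\Psi}(x)\neq0$ for every $x\in[0,1]$. Applying Theorem~\ref{IB} to the block Laurent operator $U=[\widehat{\bm{\Psi}}(i-j)]$, and arguing as in \cite{Ghosh2023} with a general generator in place of $\mathrm{sinc}$, identifies invertibility of $U$ — equivalently the CIS property of $X$ — with the non-vanishing of $\det\bm{\Psi}$ on $[0,1]$. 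Here the hypothesis $\rho=Lr$ is exactly what makes $\bm{\Psi}(x)$, and hence $G(x)$, a square $\rho\times\rho$ matrix, so that its determinant is meaningful and the factorization typechecks.

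The second step invokes the identity $\bm{\Psi}(x)=\tfrac1\rho G(x)B(x)$ derived just above the theorem, with $G(x)=[D^{d_i}u_j(t_i;x)]_{i,j=0}^{\rho-1}$ and $B(x)=[e^{-2\pi\mathrm i q(x-j)/\rho}]_{q,j=0}^{\rho-1}$, which gives $\det\bm{\Psi}(x)=\rho^{-\rho}\det G(x)\det B(x)$. The third step records that $\det B(x)\neq0$ for all $x\in[0,1]$: writing $\zeta=e^{2\pi\mathrm i/\rho}$ one has $B(x)_{qj}=\zeta^{-qx}(\zeta^{q})^{j}$, so $B(x)=\operatorname{diag}\!\big(\zeta^{-qx}\big)_{q=0}^{\rho-1}\cdot\big[(\zeta^{q})^{j}\big]_{q,j=0}^{\rho-1}$, where the second factor is a Vandermonde matrix in the distinct nodes $1,\zeta,\dots,\zeta^{\rho-1}$ and the first is a nonsingular diagonal matrix. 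Combining the three steps, we obtain that $X$ is a CIS of order $r-1$ for $V(\varphi)$ if and only if $\det\bm{\Psi}(x)\neq0$ for all $x\in[0,1]$, which in turn holds if and only if $\det G(x)=\det[D^{d_i}u_j(t_i;x)]_{i,j=0}^{\rho-1}\neq0$ for all $x\in[0,1]$; this is the assertion.

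The only non-routine point I anticipate is the first step: justifying, as in \cite{Ghosh2023}, that the CIS definition — simultaneous stable sampling and interpolation for the sample data $f^{(i)}(x_n+\rho l)$ — is faithfully encoded by invertibility of $U$ on $\ell^2(\mathbb Z)^\rho$. This uses the Riesz-basis bounds for $\{\varphi(\cdot-k):k\in\mathbb Z\}$ together with the bounds $\|\Phi\|_0$ and $\|\Phi\|_\infty$ of \eqref{eqn2.5} to pass between $\|f\|_2$ and the $\ell^2$-norm of the coefficient sequence $(c_k)$, and the observation that the block system $UC=F$ of \eqref{Eqn2.4} is the exact linearization of the interpolation equations. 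Once this correspondence is set up, the remaining steps are pure linear algebra, the Zak-transform identity having already been carried out above.
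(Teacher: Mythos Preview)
Your proposal is correct and follows essentially the same approach as the paper: the paper's proof \emph{is} the discussion preceding the theorem, which reduces the CIS property to $\det\bm{\Psi}(x)\neq0$ via Theorem~\ref{IB} and \cite{Ghosh2023}, derives the factorization $\bm{\Psi}(x)=\tfrac1\rho G(x)B(x)$ through the Zak transform, and notes that $\det B(x)\neq0$. Your only addition is the explicit Vandermonde/diagonal decomposition of $B(x)$ to justify this last point, which the paper leaves as ``easy to verify.''
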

We now turn our attention to the perfect reconstruction formula. To achieve perfect reconstruction, it is essential that both the generator and the interpolating kernels have compact support. Our goal is, therefore, to construct interpolating kernels with compact support. A general method for constructing such kernels, specifically when the generator is a B-spline, was presented in \cite{BBSV2, Butzer5}.

We now assume that $\varphi$ is  a $(r-1)$-times differentiable function  with support $[0,\mu]$ and we choose the offset vector $x=(x_0, x_1, \dots,x_{L-1})$
satisfying 
$$\alpha\leq x_0 < x_1<\cdots < x_{L-1}<\beta,$$
where $\alpha \geq 0$ and $\beta \leq \rho $. Then $\Psi_n^{pq}(x)$ is non-zero only if there exists an integer $k \in \mathbb{Z}$ such that 
$$0 < x_n + \rho k - q < \mu.$$
By our assumptions on the PNS pattern, this inequality implies that
$$-1< k < 1+\frac{\mu-1}{\rho}.$$
When $\rho \geq \mu-1$, the only terms with $k=0$ and $k=1$ contribute to the summation in \eqref{Psi1}.
Consequently, the functions $\Psi_{n}^{pq}(x)$ are of the form
\begin{equation}\label{Psi2}
\Psi_{n}^{pq}(x)=\varphi^{(p)}(x_n - q)+\varphi^{(p)}(x_n + \rho - q) e^{2 \pi \mathrm{i} x},
\end{equation} for all $0 \leq p \leq r -1,$ $0 \leq q \leq \rho-1$, and $0 \leq n \leq L-1.$

When $\alpha=\beta-1=s$ for some $s\in\left\{0,1,\dots, \rho -1\right\}$ 
and $\rho \geq \mu$,  the matrix $\bm{\Psi}(x)$  has the following structure:
\begin{equation} \label{PSIPR}
\resizebox{\textwidth}{!}{$
\bm{\Psi}(x) = 
\begin{bmatrix}
\varphi^{(0)}(x_0) & \cdots & \varphi^{(0)}(x_0-s) & \varphi^{(0)}(x_0+ \rho-s-1)z & \cdots & \varphi^{(0)}(x_0+1)z \\
\varphi^{(1)}(x_0) & \cdots & \varphi^{(1)}(x_0-s) & \varphi^{(1)}(x_0+ \rho -s-1)z & \cdots & \varphi^{(1)}(x_0+1)z \\
\vdots & \ddots & \vdots & \vdots & \ddots & \vdots \\
\varphi^{(r-1)}(x_0) & \cdots & \varphi^{(r-1)}(x_0-s) & \varphi^{(r-1)}(x_0+ \rho -s-1)z & \cdots & \varphi^{(r-1)}(x_0+1)z \\ 
\hline
\vdots & \ddots & \vdots & \vdots & \ddots & \vdots \\ 
\hline
\varphi^{(0)}(x_{L-1}) & \cdots & \varphi^{(0)}(x_{L-1}-s) & \varphi^{(0)}(x_{L-1}+ \rho -s-1)z & \cdots & \varphi^{(0)}(x_{L-1}+1)z \\ 
\varphi^{(1)}(x_{L-1}) & \cdots & \varphi^{(1)}(x_{L-1}-s) & \varphi^{(1)}(x_{L-1}+ \rho -s-1)z & \cdots & \varphi^{(1)}(x_{L-1}+1)z \\
\vdots & \ddots & \vdots & \vdots & \ddots & \vdots \\
\varphi^{(r-1)}(x_{L-1}) & \cdots & \varphi^{(r-1)}(x_{L-1}-s) & \varphi^{(r-1)}(x_{L-1}+ \rho -s-1)z & \cdots & \varphi^{(r-1)}(x_{L-1}+1)z
\end{bmatrix}_{\rho  \times \rho}
$.}
\end{equation}
Here and throughout this paper, we use the notation  $z=e^{2 \pi \mathrm{i} x}$. Applying elementary column operations to
$\bm{\Psi}(x)$, we get
$$\det \bm{\Psi}(x)= (-1)^{\rho-1} z^{\rho-s-1} \det \left[  \varphi_j^{(d_i)}(t_i) \right]_{i,j=0}^{\rho-1},
$$
where
$$t_i=x_{\lfloor\frac{i}{r}\rfloor} \text{ and } \varphi_j(x)= \varphi(x-(s+1-\rho+j)),~  0 \leq i,j \leq \rho-1.$$
Therefore, $\bm{\Psi}(x)$ is invertible for all $x\in [0,1]$  if and only if
$$\det \left[  \varphi_j^{(d_i)}(t_i) \right]_{i,j=0}^{\rho-1}
\neq 0.$$
In this case, it is clear from \eqref{PSIPR} that the inverse of $\bm{\Psi}(x)$ is of the form 
$$\bm{\Psi}^{-1}(x)=A+Bz^{-1},$$ 
where $A$ and $B$ are some $\rho\times \rho$ matrices such that the last $\rho-s-1$ rows of $A$ and the first $s+1$ rows of $B$ are all zero rows.  Therefore, the Fourier coefficients  $\widehat{\bm{\Psi}^{-1}}(\nu)$  are the zero matrix for every $\nu\neq -1,0.$ Furthermore, we have
$$\widehat{(\Psi^{-1})_{n}^{qi}}(-1)=[\widehat{\Psi^{-1}}(-1)]_{q,nr+i} =0, \text{ for }q=0,1,\dots,s, $$
and
$$\widehat{(\Psi^{-1})_{n}^{qi}}(0)=[\widehat{\Psi^{-1}}(0)]_{q,nr+i} =0, \text{ for } q=s+1,\dots,\rho-1.$$ 
Consequently, \eqref{shinp} becomes
\begin{align}\label{eqn3.12}
\Theta_{ni}(t)&=\sum\limits_{\nu=-1}^{0}\sum\limits_{q=0}^{\rho-1}\widehat{(\Psi^{-1})_{n}^{qi}}(\nu)\varphi(t-\rho \nu-q)\nonumber\\
&=\sum\limits_{\nu=-1}^{0}\sum\limits_{q=0}^{s}\widehat{(\Psi^{-1})_{n}^{qi}}(\nu)\varphi(t-\rho \nu-q)+\sum\limits_{\nu=-1}^{0}\sum\limits_{q=s+1}^{\rho-1}\widehat{(\Psi^{-1})_{n}^{qi}}(\nu)\varphi(t-\rho \nu-q)\nonumber \\
&=\sum\limits_{q=0}^{s}\widehat{(\Psi^{-1})_{n}^{qi}}(0)\varphi(t-q)+\sum\limits_{q=s+1}^{\rho-1}\widehat{(\Psi^{-1})_{n}^{qi}}(-1)\varphi(t+\rho-q).
\end{align}
It is clear that the interpolating kernels $\Theta_{ni} \in \mathcal{A}_r$. Since the support of $\varphi$ is $[0,\mu]$, we obtain from \eqref{eqn3.12} that  
\begin{align*}
\text{ the support of }\Theta_{ni}
&= \bigcup\limits_{q=0}^{s} \left[q,\mu+q\right] \cup \bigcup\limits_{q=s+1}^{\rho-1}\left[-\rho +q, \mu-\rho+q\right]\nonumber\\
&=\left[0,\mu+s\right] \cup \left[-\rho+s+1, \mu-1\right]\nonumber\\
&=\left[-\rho+s+1,\mu+s\right].
\end{align*}
We summarize our discussion in the following result.
\begin{thm}\label{theorem3.2}
Let $\varphi \in \mathcal{A}_r$ be a stable generator for $V(\varphi)$
with support $[0,\mu]$. Let $X = \big\{x_n+\rho l:0\leq n\leq L-1,~l\in\mathbb{Z}\big\}$, where $x_0,x_1,\dots, x_{L-1} \in [s,s+1)$ for some $s\in  \{0,1,\dots,\rho-1\}$. Define
$$t_i=x_{\lfloor\frac{i}{r}\rfloor} \text{ and } \varphi_j(x)= \varphi(x-(s+1-\rho+j)),~  0 \leq i,j \leq \rho-1.$$
If $\rho=Lr\geq \mu $, then $X$ is a CIS of order $r-1$ for $V(\varphi)$ if and only if 
$$\det\left[  \varphi_j^{(d_i)}(t_i) \right]_{i,j=0}^{\rho-1}\neq 0.$$ 
In this case, every $f\in V(\varphi)$ can be written as
\begin{align}\label{sampling formula}
f(t)=\sum_{n=0}^{L-1}\sum_{l\in\mathbb{Z}}\sum\limits_{i=0}^{r-1}f^{(i)}(x_n+ \rho l)\Theta_{ni}(t-\rho l),
\end{align}
where the interpolating kernels $\Theta_{ni}$, as defined in  \eqref{eqn3.12}, belong to $\mathcal{A}_r$ and have  support on the interval $\left[-\rho+s+1,\mu+s\right].$ 
\end{thm}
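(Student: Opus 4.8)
\emph{Proof proposal.} The plan is to combine three ingredients developed above: the general complete-interpolation criterion for a PNS set with derivatives, the factorization of the polyphase determinant made possible by the compact support of $\varphi$ together with the offset constraint, and the resulting finite form of the interpolating kernels. For the criterion, since $\varphi\in\mathcal{A}_r$, the decay estimates \eqref{decayproperty} make each block $U_n^{pq}$ of the system \eqref{Eqn2.4} a Laurent operator with the continuous symbol $\Psi_n^{pq}$ of \eqref{Psi1}, so $U$ is a block Laurent operator with symbol the polyphase matrix $\bm{\Psi}$ of \eqref{3.6}. Arguing as in \cite{Ghosh2023} with an arbitrary generator in place of $\operatorname{sinc}$ and applying Theorem \ref{IB}, the set $X$ is a CIS of order $r-1$ for $V(\varphi)$ if and only if $\rho=Lr$ and $\det\bm{\Psi}(x)\neq0$ for every $x\in[0,1]$; in that case $U^{-1}=[\widehat{\bm{\Psi}^{-1}}(i-j)]$, which yields the reconstruction formula with the kernels $\Theta_{ni}$ of \eqref{shinp} and convergence in $V(\varphi)$. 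Hence, under the hypothesis $\rho=Lr$, it remains to re-express the nonvanishing of $\det\bm{\Psi}$ and to compute the kernels, now using $\operatorname{supp}\varphi=[0,\mu]$, $x_0,\dots,x_{L-1}\in[s,s+1)$, and $\rho\geq\mu$.

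Next I would localize. With $\operatorname{supp}\varphi=[0,\mu]$, $x_n\in[s,s+1)$, and $\rho\geq\mu$, a term $\varphi^{(p)}(x_n+\rho k-q)$ in \eqref{Psi1} can be nonzero only for $k=0$ (and then only when $0\le q\le s$) or $k=1$ (and then only when $s+1\le q\le\rho-1$); hence every symbol takes the two-term form \eqref{Psi2}, and $\bm{\Psi}(x)$ acquires the structure \eqref{PSIPR}, namely $\bm{\Psi}(x)=[\,M_0\mid zM_1\,]$ with $z=e^{2\pi\mathrm{i}x}$, where $M_0$ is the $\rho\times(s+1)$ constant block of the $z$-free columns $q=0,\dots,s$ and $M_1$ the $\rho\times(\rho-s-1)$ constant block obtained from the columns $q=s+1,\dots,\rho-1$ by stripping the factor $z$. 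Factoring $z$ out of those last $\rho-s-1$ columns and performing the elementary column operations recorded above gives $\det\bm{\Psi}(x)=(-1)^{\rho-1}z^{\rho-s-1}\det\big[\varphi_j^{(d_i)}(t_i)\big]_{i,j=0}^{\rho-1}$, and since $z$ never vanishes this is nonzero for all $x$ precisely when $\det[\varphi_j^{(d_i)}(t_i)]_{i,j=0}^{\rho-1}\neq0$; this proves the claimed equivalence and also shows that, in that case, $[\,M_0\mid M_1\,]$ is invertible.

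Finally I would identify the kernels. Writing $[\,M_0\mid M_1\,]^{-1}=\left[\begin{smallmatrix}N_0\\ N_1\end{smallmatrix}\right]$ with $N_0$ its first $s+1$ rows, a direct check gives $[\,M_0\mid zM_1\,]\left[\begin{smallmatrix}N_0\\ z^{-1}N_1\end{smallmatrix}\right]=M_0N_0+M_1N_1=I$, so $\bm{\Psi}^{-1}(x)=A+z^{-1}B$ with $A=\left[\begin{smallmatrix}N_0\\ 0\end{smallmatrix}\right]$ (last $\rho-s-1$ rows zero) and $B=\left[\begin{smallmatrix}0\\ N_1\end{smallmatrix}\right]$ (first $s+1$ rows zero). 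Consequently $\widehat{\bm{\Psi}^{-1}}(\nu)=0$ for $\nu\notin\{-1,0\}$, $\widehat{\bm{\Psi}^{-1}}(0)=A$, $\widehat{\bm{\Psi}^{-1}}(-1)=B$, and substituting into \eqref{shinp} collapses $\Theta_{ni}$ to the finite sum \eqref{eqn3.12}. Each $\Theta_{ni}$ is then a finite linear combination of integer shifts of $\varphi\in\mathcal{A}_r$, hence $\Theta_{ni}\in\mathcal{A}_r$, and \eqref{eqn3.12} exhibits its support as $\bigcup_{q=0}^{s}[q,\mu+q]\cup\bigcup_{q=s+1}^{\rho-1}[q-\rho,\mu+q-\rho]=[0,\mu+s]\cup[s+1-\rho,\mu-1]=[-\rho+s+1,\mu+s]$, while the reconstruction formula and its $V(\varphi)$-convergence are inherited from the first step. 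I expect the crux to be precisely this last point: establishing that $\bm{\Psi}^{-1}(x)$ is the \emph{affine} Laurent polynomial $A+z^{-1}B$ with the stated zero-row pattern rather than a generic rational function of $z$, since it is exactly that structure---not merely the invertibility of $\bm{\Psi}$---that forces the finite, compactly supported kernel. The block-inverse identity above is the cleanest route; alternatively one can track the $z$-degrees of the cofactors of $\bm{\Psi}$ (deleting a $z$-free column gives a constant times $z^{\rho-s-1}$, deleting a $z$-column a constant times $z^{\rho-s-2}$) and divide by $\det\bm{\Psi}(x)=c\,z^{\rho-s-1}$.
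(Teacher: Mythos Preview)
Your proposal is correct and follows essentially the same route as the paper's argument leading up to Theorem~\ref{theorem3.2}: the general CIS criterion via block Laurent operators, the two-term reduction \eqref{Psi2} from the support and offset constraints, the determinant factorization, and the collapse of the kernels to \eqref{eqn3.12}. The one place you go slightly beyond the paper is in the inverse step: where the paper simply asserts ``it is clear from \eqref{PSIPR} that $\bm{\Psi}^{-1}(x)=A+Bz^{-1}$'' with the stated zero-row pattern, you supply the explicit block-matrix verification $[M_0\mid zM_1]\left[\begin{smallmatrix}N_0\\ z^{-1}N_1\end{smallmatrix}\right]=I$, which is a clean justification of exactly the point you correctly identify as the crux.
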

We now use our main result to explore the complete interpolation property of certain sampling sets for shift-invariant spaces generated by B-splines. B-splines are highly valuable in practical applications due to the availability of efficient and robust algorithms for their computation.

Let $\left\{y_i:i\in\mathbb{Z}\right\}$ be a sequence of real numbers such that $y_i\leq y_{i+1}$ for all $i$. Given integers $i$ and $m>0$, the $m$-th order (normalized) B-splines associated with the knots $y_i\leq\cdots\leq y_{i+m}$ is defined by
\begin{equation}
N_i^m(t)=
\begin{cases}
(-1)^m(y_{i+m}-y_i)[y_i, \dots,y_{i+m}](t-y)_+^{m-1}, & \text{ if } y_i< y_{i+m}\\
 0, & \text{otherwise},
\end{cases}
\end{equation}
for all real $t$, where $[y_i, \dots, y_{i+m}]$ is a divided difference and 
$u_+ := \max\{u, 0\}.$

Let $Q_m(t):= N_0^m(t)$ denote the B-spline associated with the simple knots $0,1,\dots,m$. Then the normalized B-splines associated with the knots 
$i\leq i+1\leq\cdots\leq i+m$ is given by 
$$N^m_i(t)=Q_{m}(t-i).$$

The B-spline $Q_m$ is $(m-2)$-times differentiable with support on the interval $\left[0,m\right]$. It is symmetric about the line $t=\frac{m}{2}$. The B-splines and their derivatives can be computed by the following formulae
$$Q_m(t)=\dfrac{1}{(m-1)!}\displaystyle\sum_{j=0}^{m}(-1)^j
\binom{m}{j} \left(t-j\right)_+^{m-1},~m\geq 2,$$
and 
$$\hspace{-1.55cm} Q_m^\prime(t)=Q_{m-1}(t)-Q_{m-1}(t-1),~m >2,$$
respectively. It is well known that the B-spline $Q_m$ is an $(m-2)$-regular stable generator for $V(Q_m)$. The following theorem, known as the Schoenberg-Whitney conditions, provides a precise criterion for the existence and uniqueness of a solution to a Hermite interpolation problem in spline spaces.

\begin{thm}\cite{Schumaker1981} \label{Spline}
Let $N_1^m, \dots, N_n^m$ be a set of B-splines of order $m$ associated with the knots $y_1\leq\cdots\leq y_{n+m}$.
Let $t_1 \leq \cdots \leq t_n$ with $t_i<t_{i+m}$, all $i$. Then 
 $$ \det \left[ D_+^{d_i} N_j^m (t_i) \right]_{i,j=1}^{n}\geq0,$$
and strict positivity holds if and only if
$$
t_i \in  
\begin{cases}  
[y_i, y_{i+m}), & \text{if } d_i \geq m - \alpha_i, \\  
(y_i, y_{i+m}), & \text{otherwise },  
\end{cases}  
$$
where $d_i=\max\left\{j : t_i=\cdots = t_{i-j}\right\}$ and $\alpha_i=\max\left\{j : y_i = \cdots = y_{i+j-1}\right\}$,  $i = 1,2, \dots, n,$ and $D_+$ denotes the right hand derivative operator.
\end{thm}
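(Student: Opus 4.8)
Theorem~\ref{Spline} is a classical fact about spline collocation, so the plan is to deduce it from two ingredients: the total positivity of B-spline collocation matrices (which gives $\det\ge 0$ unconditionally) and a support/zero-counting analysis that identifies when the determinant is strictly positive. Write $M:=\left[D_+^{d_i}N_j^m(t_i)\right]_{i,j=1}^n$ and $D:=\det M$. For the nonnegativity I would use that each $N_j^m$ is, up to a positive constant, an $m$-th divided difference of the truncated power kernel $K(t,y)=(t-y)_+^{m-1}$, which is totally positive in $(t,y)$. For distinct nodes $t_i$, expanding $D$ through this representation and applying the Cauchy--Binet formula exhibits it as a sum of products of minors of a totally positive kernel with nonnegative coefficients, so $D\ge 0$; the confluent case, where some $t_i$ coincide and one uses the one-sided derivatives $D_+^{d_i}$, then follows by coalescing simple nodes and using that divided differences of the corresponding order converge to these derivatives, together with continuity of $D$ in the nodes.

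\textbf{Necessity of the conditions.} Here I would use the \emph{staircase} support structure of $M$: since $\operatorname{supp}N_j^m=[y_j,y_{j+m}]$ and $t_1\le\cdots\le t_n$, many entries vanish for support reasons. Suppose some $t_k$ violates the stated condition, i.e. $t_k\ge y_{k+m}$, or $t_k\le y_k$ with $d_k<m-\alpha_k$. In the first case the rows $k,\dots,n$ and columns $1,\dots,k$, and in the second case the rows $1,\dots,k$ and columns $k,\dots,n$, form an all-zero submatrix of $M$ --- using, in the boundary situation $t_k=y_k$, that $N_k^m$ vanishes to order exactly $m-\alpha_k$ at $y_k$ from the right, together with the elementary comparison of $d_i$ and $\alpha_j$ across the relevant cluster of equal knots and nodes. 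Since such a submatrix has row-count plus column-count equal to $n+1$, we get $D=0$. This is exactly the counting argument of the classical Schoenberg--Whitney theorem, adapted to one-sided derivatives.

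\textbf{Sufficiency.} Assume the interlacing conditions; by the nonnegativity part it suffices to exclude $D=0$. If $M$ were singular, there would be a nonzero vector $c=(c_1,\dots,c_n)$ with $s:=\sum_j c_jN_j^m$ satisfying $D_+^{d_i}s(t_i)=0$ for all $i$. Restricting to the connected components of the support of the active B-splines $\{N_j^m:c_j\ne 0\}$ and invoking the Budan--Fourier / variation-diminishing bound for the number of zeros of a spline with prescribed B-spline coefficients --- zeros counted with the Hermite multiplicities $d_i$ --- one finds that $s$ would possess more zeros on the relevant intervals than its coefficient count permits, the interlacing conditions being precisely what guarantees that each prescribed zero genuinely counts. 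This contradiction yields $D>0$. Alternatively one can finish entirely within the total-positivity framework, by a standard determinantal argument for totally positive staircase matrices (induction on $n$ via Laplace expansion, the complementary minor being the collocation matrix of $N_2^m,\dots,N_n^m$ at $t_2,\dots,t_n$, which again satisfies the interlacing conditions; or induction on the order $m$ via the recurrence $Q_m'(t)=Q_{m-1}(t)-Q_{m-1}(t-1)$ recorded above).

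\textbf{Main obstacle.} The delicate point throughout is the confluent (Hermite) bookkeeping at knots: one must track the exact order of vanishing of $N_i^m$ at a knot of prescribed multiplicity, reconcile the right-hand derivatives $D_+^{d_i}$ with that order, and ensure the coalescence argument proving $D\ge 0$ is compatible with the strict positivity needed for sufficiency --- continuity alone does not upgrade $\ge 0$ to $>0$. It is precisely this accounting that forces the case distinction ``$d_i\ge m-\alpha_i$ versus otherwise'' in the statement, and it is why the spline zero-counting argument (or a careful induction) is preferable to a pure limiting argument for the strict inequality.
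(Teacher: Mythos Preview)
The paper does not prove Theorem~\ref{Spline}; it is quoted verbatim from Schumaker's book \cite{Schumaker1981} and used as a black box (in the proofs of Corollaries~\ref{cor 3.1} and~\ref{cor 3.2}). So there is no ``paper's own proof'' to compare against.

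That said, your sketch is a faithful outline of the classical argument as it appears in Schumaker: nonnegativity via total positivity of the truncated-power kernel and Cauchy--Binet (with coalescence to handle confluent nodes), necessity via the staircase zero-block structure of the collocation matrix, and sufficiency via a spline zero-count (Budan--Fourier/variation-diminishing) that contradicts singularity under the interlacing hypotheses. Your identification of the main technical difficulty---the bookkeeping of one-sided derivative orders $d_i$ against knot multiplicities $\alpha_i$, which is exactly what produces the two-case interlacing condition---is also on target. If you intend to include a self-contained proof rather than a citation, the sketch would need the precise statement of the spline zero bound with multiplicities (e.g., Schumaker, Theorem~4.53) and a careful verification that each prescribed Hermite zero is admissible under the interlacing condition; as written, those steps are gestured at but not pinned down.
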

\begin{cor}\label{cor 3.1}
Assume that $\rho=Lr=m>r$. If $x_0,x_1,\dots, x_{L-1} \in [s,s+1)$ for some $s\in  \{0,1,\dots,\rho-1\}$, then $X = \big\{x_n+\rho l:0\leq n\leq L-1,~l\in\mathbb{Z}\big\}$ is a CIS of order $r-1$ for $V(Q_m)$. 
\end{cor}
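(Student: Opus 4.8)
The plan is to apply Theorem \ref{theorem3.2} with $\varphi=Q_m$ and then reduce the resulting nonvanishing-determinant condition to the Schoenberg--Whitney criterion in Theorem \ref{Spline}. First I would check the hypotheses of Theorem \ref{theorem3.2}. Since $m>r$ we have $L=m/r\geq 2$ and $r-1\leq m-2$, so $Q_m$ is $(r-1)$-times continuously differentiable and compactly supported on $[0,\mu]$ with $\mu=m$; being an $(m-2)$-regular, hence $(r-1)$-regular, stable generator, $Q_m$ belongs to $\mathcal{A}_r$ and is a stable generator for $V(Q_m)$. Moreover $\rho=Lr=m=\mu$, so the requirement $\rho=Lr\geq\mu$ holds. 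Theorem \ref{theorem3.2} then says that $X$ is a CIS of order $r-1$ for $V(Q_m)$ if and only if $\det\left[\varphi_j^{(d_i)}(t_i)\right]_{i,j=0}^{\rho-1}\neq 0$, where $d_i=i\bmod r$, $t_i=x_{\lfloor i/r\rfloor}$, and $\varphi_j(x)=Q_m\big(x-(s+1-m+j)\big)$.

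Next I would recognize this as a Hermite--Schoenberg--Whitney determinant. Here $\varphi_j=N_{s+1-m+j}^{m}$ is the normalized $m$-th order B-spline with the simple knots $s+1-m+j,\dots,s+1+j$, so, writing $y_j:=s+1-m+j$, the spline $\varphi_j$ is attached to $y_j\leq\cdots\leq y_{j+m}$ for $j=0,\dots,\rho-1$ and every knot is simple. On the node side the points group into $L$ blocks of length $r$: $t_0=\cdots=t_{r-1}=x_0$, \dots, $t_{(L-1)r}=\cdots=t_{\rho-1}=x_{L-1}$; with this block structure the index $\max\{j:t_i=\cdots=t_{i-j}\}$ of Theorem \ref{Spline} equals $i\bmod r$, which is exactly the differentiation order $d_i$ in the determinant above. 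Since $Q_m\in C^{m-2}$ and $d_i\leq r-1\leq m-2$, the right-hand derivatives in Theorem \ref{Spline} agree with the ordinary ones, so the two determinants coincide. Because all knots are simple, $\alpha_i=1$ and $d_i\leq r-1<m-1=m-\alpha_i$, so the first branch of Theorem \ref{Spline} never occurs; also the requirement $t_i<t_{i+m}$ is vacuous, there being only $\rho=m$ nodes. Hence Theorem \ref{Spline} gives strict positivity of the determinant exactly when each node lies in the \emph{open} interval $t_i\in(y_i,y_{i+m})=\big(s+1-m+i,\ s+1+i\big)$.

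It then remains to verify these $\rho$ open-interval conditions from $x_0<x_1<\cdots<x_{L-1}$ with each $x_k\in[s,s+1)$. The upper bounds $x_{\lfloor i/r\rfloor}<s+1+i$ are immediate from $x_k<s+1$ and $i\geq 0$. For the lower bounds $x_{\lfloor i/r\rfloor}>s+1-m+i$, fix a node equal to $x_k$; the indices $i$ with $\lfloor i/r\rfloor=k$ run over $\{kr,\dots,kr+r-1\}$, and the tightest constraint, at $i=kr+r-1$, reads $x_k>s-m+(k+1)r$. Now $(k+1)r\leq Lr=m$, with equality only when $k=L-1$; for $k\leq L-2$ the right-hand side is at most $s-r<s\leq x_k$, while for $k=L-1$ we need the strict inequality $x_{L-1}>s$, which holds because $m>r$ forces $L\geq 2$, whence strict monotonicity gives $x_{L-1}>x_0\geq s$. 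Thus the determinant is strictly positive, in particular nonzero, and Theorem \ref{theorem3.2} yields that $X$ is a CIS of order $r-1$ for $V(Q_m)$ together with the perfect reconstruction formula \eqref{sampling formula}.

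I expect the B-spline and Fourier bookkeeping — matching the indexing of Theorem \ref{theorem3.2} to that of Theorem \ref{Spline}, and passing from $D_+$ to $D$ — to be routine; the one point that genuinely requires care is the lower end of the interval conditions at the last node $x_{L-1}$, where one must use $m>r$ (hence $L\geq 2$) and strict monotonicity of the offsets to upgrade $x_{L-1}\geq s$ to $x_{L-1}>s$ and keep that node in the open interval.
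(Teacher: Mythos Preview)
Your proof is correct and follows essentially the same route as the paper: apply Theorem~\ref{theorem3.2} with $\varphi=Q_m$, identify the resulting determinant as a Schoenberg--Whitney Hermite determinant via Theorem~\ref{Spline}, and then verify the open-interval conditions $t_i\in(y_i,y_{i+m})$. You are in fact more careful than the paper at the endpoint $j=L-1$, where the strict inequality $x_{L-1}>s$ genuinely requires $L\geq 2$ together with the standing strict ordering $x_0<\cdots<x_{L-1}$ of the offsets (and your value $\alpha_i=1$ for simple knots is the correct one); the paper simply asserts that the interval conditions are ``always true.''
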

\begin{proof}
If we choose $\varphi(t)=Q_m(t)$, then
$$\varphi_j(t)=Q_m(t-(s+1-\rho+j))=N_j^m(t),$$ 
where $N_j^m(t)$ is the B-spline of order $m$ associated with knots $y_j<y_{j+1}<\cdots<y_{j+m}$ with $y_j=s+1-\rho+j$, $j=0,1,\dots, \rho-1$. 
Now, it is clear that 
$$ \varphi^{(d_i)}_j (t_i)= D^{d_i}N_j^{m} (t_i),~~ d_i= i \bmod r,  ~~t_i=x_{\lfloor\frac{i}{r}\rfloor}, ~~\alpha_i=0, ~~0 \leq i,j \leq \rho-1. $$
Now it follows from Theorem \ref{Spline} that
$\det\left[  \varphi^{(d_i)}_j(t_i) \right]_{i,j=0}^{\rho-1}\neq 0$
if and only if
\begin{equation}\label{condition for ti}
 t_i \in \left(s+1-\rho +i, s+1-\rho +m+i\right) \text{ for } i=0,1, \dots, \rho-1.
\end{equation}
Since $x_j=t_{jr+k}$, $k=0,1, \dots, r-1$, \eqref{condition for ti} equivalent to
\begin{equation}\label{condition fot xi}
x_j \in \left(s-\rho +(j+1)r, s-\rho+jr+m+1\right), \text{ for }j=0,1,\dots, L-1.
\end{equation}
If $x_0,x_1,\dots, x_{L-1} \in [s,s+1)$ for some $s\in \{0,1,\dots,\rho-1\}$, \eqref{condition fot xi} is always true. Now our desired result follows from Theorem \ref{theorem3.2}.
\end{proof}
The following result was originally proved in \cite{Priyanka}. We present a simplified proof using the Schoenberg–Whitney conditions.
\begin{cor}\label{cor 3.2}
For $m \geq 2$, $\frac{1}{2}+(m-1)\mathbb{Z}$ is not a CIS of order $m-2$ for $V(Q_m)$, but $(m-1)\mathbb{Z}$ is.
\end{cor}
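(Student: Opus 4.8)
The plan is to fall back on the polyphase characterization, since Theorem \ref{theorem3.2} does not apply here: both sequences are PNS sets of period $\rho=m-1$ and length $L=1$, a CIS of order $m-2$ forces $\rho=Lr$, so $r=m-1$ and $\rho=m-1<m=\mu$. The single offset $x_0$ lies in $[0,1)$, so $s=0$. By the characterization used throughout Section 3, with $\rho=Lr$ the set $X$ is a CIS of order $m-2$ for $V(Q_m)$ if and only if $\det\bm{\Psi}(x)\neq0$ for all $x\in[0,1]$. Since $\rho=\mu-1$, formula \eqref{Psi2} is still valid, so the $(p,q)$-entry of $\bm{\Psi}(x)$ is $Q_m^{(p)}(x_0-q)+Q_m^{(p)}(x_0+m-1-q)\,z$ with $z=e^{2\pi\mathrm{i}x}$.

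First I would reduce $\det\bm{\Psi}(x)$ to a linear polynomial in $z$. Since $Q_m$ is supported on $[0,m]$ and $x_0\in[0,1)$, the term $Q_m^{(p)}(x_0-q)$ vanishes whenever $q\geq1$; hence columns $q=1,\dots,m-2$ of $\bm{\Psi}(x)$ are $z$ times a constant vector, while column $q=0$ equals $a+bz$ with $a=[Q_m^{(p)}(x_0)]_{p=0}^{m-2}$ and $b=[Q_m^{(p)}(x_0+m-1)]_{p=0}^{m-2}$. Expanding by multilinearity in the columns yields $\det\bm{\Psi}(x)=z^{m-2}(\delta_a+\delta_b\,z)$, where $\delta_a,\delta_b$ are $(m-1)\times(m-1)$ determinants whose columns are the jets $[Q_m^{(p)}(\xi)]_{p=0}^{m-2}$ evaluated at $m-1$ consecutive shifted nodes. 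Thus $X$ is a CIS of order $m-2$ exactly when $\delta_a+\delta_b z$ has no zero on the unit circle.

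For $(m-1)\mathbb{Z}$ we have $x_0=0$, so $a=0$ because $Q_m^{(p)}(0)=0$ for $0\leq p\leq m-2$; hence $\det\bm{\Psi}(x)=\pm z^{m-1}D$ with $D:=\det[Q_m^{(p)}(j)]_{p=0,\dots,m-2;\ j=1,\dots,m-1}$, and the matter reduces to proving $D\neq0$. Here I would invoke the Schoenberg--Whitney conditions. Set $N_j^m(t):=Q_m(t-j)$, the order-$m$ B-spline for the simple integer knot sequence $y_i=i$ (so $\alpha_i=1$), and let all $m-1$ collocation nodes coincide at $t_i=m$, so that $d_i=i-1$. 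Then $\det[D_+^{d_i}N_j^m(t_i)]_{i,j=1}^{m-1}$ coincides with $D$ up to a column reversal. Since $d_i=i-1<m-1=m-\alpha_i$ for every $i$, Theorem \ref{Spline} requires only the open-interval condition $t_i=m\in(y_i,y_{i+m})=(i,i+m)$, which holds for each $i\in\{1,\dots,m-1\}$; hence that determinant is strictly positive and $D\neq0$. So $(m-1)\mathbb{Z}$ is a CIS of order $m-2$ for $V(Q_m)$.

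For $\tfrac12+(m-1)\mathbb{Z}$ we have $x_0=\tfrac12$, and reordering columns as above identifies $\delta_a=\pm E$ and $\delta_b=\pm F$ with $E=\det[Q_m^{(p)}(j+\tfrac12)]_{p;\,j=0,\dots,m-2}$ and $F=\det[Q_m^{(p)}(k+\tfrac12)]_{p;\,k=1,\dots,m-1}$. The reflection symmetry $Q_m^{(p)}(m-t)=(-1)^pQ_m^{(p)}(t)$ forces $F=E$: substitute $k\mapsto m-1-k$, pull $(-1)^p$ out of each of the $m-1$ rows, and reverse the $m-1$ columns, so the two sign factors cancel. Therefore $|\delta_a|=|\delta_b|$, so $\delta_a+\delta_b z$ must vanish somewhere on $|z|=1$ (either both coefficients vanish, or the root $-\delta_a/\delta_b$ has modulus one); hence $\det\bm{\Psi}(x_0)=0$ for some $x_0\in[0,1)$ and $\tfrac12+(m-1)\mathbb{Z}$ is not a CIS of order $m-2$. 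The one genuinely delicate step is $D\neq0$: $D$ is not manifestly a Schoenberg--Whitney collocation determinant — it records all derivative orders of the single spline $Q_m$ at several points — and the device that turns it into one is the integer-shift family $N_j^m=Q_m(\cdot-j)$ together with collapsing every collocation node to $t=m$; the remaining sign bookkeeping is routine once the column orderings are fixed.
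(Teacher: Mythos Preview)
Your argument is correct and follows the same overall strategy as the paper: compute $\bm{\Psi}$ via \eqref{Psi2} (valid since $\rho=\mu-1$), factor $\det\bm{\Psi}$ as a power of $z$ times a low-degree polynomial in $z$, and test for zeros on $|z|=1$.

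For $(m-1)\mathbb{Z}$ the two proofs coincide. Both obtain $\det\bm{\Psi}=\pm z^{m-1}$ times a constant Hermite collocation determinant and verify its nonvanishing by the Schoenberg--Whitney conditions (Theorem~\ref{Spline}). You collapse all $m-1$ nodes at $t=m$ with $N_j^m=Q_m(\cdot-j)$; the paper collapses them at $t=0$ with knots $y_j=j-m+1$. These are integer translates of one another, and the interval checks $t_i\in(y_i,y_{i+m})$ are the same up to that shift.

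For $\tfrac12+(m-1)\mathbb{Z}$ the arguments differ in style. The paper works column-by-column: it invokes the symmetry of $Q_m$ about $m/2$ to write the $q=0$ column as $Q_m^{(p)}(\tfrac12)(1+z)$ and concludes that $\det\bm{\Psi}$ vanishes. You instead run a global symmetry on the full determinant: after writing $\det\bm{\Psi}=z^{m-2}(\delta_a+\delta_b z)$, the reflection $t\mapsto m-t$ together with the row signs $(-1)^p$ and a column reversal give $F=E$, hence $|\delta_a|=|\delta_b|$ and a unimodular (indeed real, so $\pm1$) root. Your version is actually the more careful of the two: the paper's displayed identity omits the factor $(-1)^p$ in $Q_m^{(p)}(m-\tfrac12)=(-1)^pQ_m^{(p)}(\tfrac12)$, so the single-column factorization as written holds only for even $p$, whereas your full-determinant symmetry is insensitive to this sign and yields the conclusion cleanly.
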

\begin{proof}
For the sample set $(m-1)\mathbb{Z}$,  we have
\begin{equation*}\label{Psi3}
\Psi_{0}^{pq}(x)=Q_m^{(p)}(m-1 - q) e^{2 \pi \mathrm{i} x},~~ 0 \leq p,q \leq m-2,
\end{equation*}
from \eqref{Psi2}. Consequently, the determinant of $\bm{\Psi}(x)$ is given by
$$\det \bm{\Psi}(x)=  z^{m-1} \det \left[ Q_m^{(i)}(m-1-j)\right] =z^{m-1} \det \left[D^{i}N_j^{m} (0)\right]_{i,j=0}^{m-2},
$$
where
$N_j^m(t)$ is the  B-spline of order $m$ associated with the knots $y_j<y_{j+1}<\cdots<y_{j+m}$, given by $y_j=j-m+1$, $j=0,1,\dots, m-2$.  
Now it follows from Theorem \ref{Spline} that
$\det\left[ D^{i} N^m_j(0) \right]_{i,j=0}^{m-2}\neq 0$
if and only if
\begin{equation*}
 0 \in \left(i-m+1, i+1\right) \text{ for } i=0,1, \dots, m-2.
\end{equation*}
This condition always holds. Hence $(m-1)\mathbb{Z}$ is a CIS of order $m-2$ for $V(Q_m)$. 
For the sampling set $\frac{1}{2}+(m-1)\mathbb{Z}$,  the symmetric property of the B-spline $Q_m$ implies that
\begin{align}
\Psi_{0}^{p0}(x)
=Q_m^{(p)}\Big(\frac{1}{2}\Big)+Q_m^{(p)}\Big(m-\frac{1}{2}\Big) e^{2 \pi i x} = Q_m^{(p)}\Big(\frac{1}{2}\Big)(1+ e^{2 \pi \mathrm{i} x}),\nonumber
\end{align}
for all $0\leq p \leq m-2$. Clearly, $\det \bm{\Psi}(0)=0$, which confirms that $\frac{1}{2}+(m-1)\mathbb{Z}$ is not a CIS of order $m-2$ for $V(Q_m)$. 
\end{proof}

\begin{rem}
Consider  the sampling set $a + r\mathbb{Z}$ for the space $V(Q_m)$ and $m>r$. In this context, one can show that
$$\det \bm\Psi(z) = z^{r - 2a} P(a, z),$$
where
$$P(a, z) = c_0 + c_1 z + \cdots + c_{m - r + 2a - 1} z^{m - r + 2a - 1}
$$
is a self-inversive polynomial of degree $m - r + 2a - 1$.
Owing to the complexity of the calculations, the detailed proof of the self-inversive property is provided in the supplementary material; interested readers can refer to it for a full derivation.

It then follows that if $a + r\mathbb{Z}$ forms a CIS of order $r-1$ for $V(Q_m)$, we must have either
$$
\text{(i) } m \text{ even and } a = \left\langle \frac{r+1}{2} \right\rangle, \quad \text{or} \quad
\text{(ii) } m \text{ odd and } a = \left\langle \frac{r}{2} \right\rangle.
$$
It has been conjectured in \cite{Priyanka} that the converse of above statement is true. We further conjecture that all zeros of the polynomial $P(a,z)$ are simple and lie on the positive real axis. If this property were established, the converse statement would follow immediately from the properties of self-inversive polynomials. Despite considerable effort, this result has not yet been proven.
\end{rem}

We now present an example illustrating that a PNS set can be a CIS for one shift-invariant space but not for another. Furthermore, the associated interpolating kernels are not compactly supported.

Consider the sampling set
$$X=(0.5+4\mathbb{Z}) \cup (2.5+4\mathbb{Z}) 
\text{ with } L=2, ~\rho=4, \text{ and } r=2.$$
When the   generator is $Q_3$,  the associated polyphase matrix $\bm{\Psi}(x)$ is given by
\begin{equation*}
    \renewcommand{\arraystretch}{1.4}
\bm{\Psi}(x) = 
\begin{bmatrix}
\frac{1}{8} & 0 & \frac{z}{8} & \frac{3z}{4} \\ 
\frac{1}{2}& 0 & -\frac{z}{2} &0 \\ 
\frac{1}{8} & \frac{3}{4} & \frac{1}{8} & 0 \\
 -\frac{1}{2} & 0 & \frac{1}{2} & 0 
\end{bmatrix}.
\end{equation*}
Its determinant is  $\det\bm{\Psi}(x)=\frac{9}{64}z(z-1)$, which vanishes at $z=1$. Therefore, $X$ is not a CIS of order $1$ for the space $V(Q_3)$.
However, when the generator is $Q_4$, the polyphase matrix $\bm{\Psi}(x)$ becomes
\begin{equation*}
 \renewcommand{\arraystretch}{1.4} 
 \bm{\Psi}(x) = 
\begin{bmatrix}
\frac{1}{48}& \frac{z}{48} &\frac{23z}{48} & \frac{23z}{48} \\ 
\frac{1}{8} &-\frac{z}{8} & -\frac{5z}{8} &\frac{5z}{8} \\ 
\frac{23}{48} &\frac{23}{48}  & \frac{1}{48}  & \frac{z}{48} \\ 
-\frac{5}{8} & \frac{5}{8} & \frac{1}{8}  & -\frac{z}{8}
\end{bmatrix},
\end{equation*}
with $\det\bm{\Psi}(x)=-\frac{z \left(9 z^2 - 1426 z + 9\right)}{4096}\neq 0
$ for all $x\in[0,1]$. Hence, $X$ is a CIS of order $1$ for the space $V(Q_4)$.
Since the determinant of $\bm{\Psi}(x)$ is not a monomial, the corresponding interpolating kernels are not compactly supported.

\section{Signal Prediction from Past Derivative Samples}
Let $1\leq p<\infty$ and let $\varphi \in \mathcal{A}_r$ be a stable generator for $V(\varphi)$. Then there exist two positive constants $m_p$ and $M_p$ such that
$$m_p\|c\|_p\leq \left\|\sum\limits_{k\in\mathbb{Z}}c_k\varphi(\cdot-k)\right\|_p\leq M_p\|c\|_p,$$
for every $c=(c_k)\in \ell^p(\mathbb{Z})$.
Consequently, the space
$$V^p(\varphi):=\left\{f\in L^p(\mathbb{R}):f(\cdot)=\sum_{k\in\mathbb{Z}}c_k\phi(\cdot-k), (c_k)\in \ell^p(\mathbb{Z})\right\}$$ is a closed subspace of $L^p(\mathbb{R})$; 
see \cite{AlGr}.
Assume that $X= \big\{x_n+\rho l:0\leq n\leq L-1,~l\in\mathbb{Z}\big\}$ is a CIS of order $r-1$ for $V(\varphi)$.
Then, following the argument in \cite{GhAn}, it can be shown that
every $f \in V^p(\varphi)$ can be written as
\begin{align*}
f(t)=\sum_{n=0}^{L-1}\sum_{l\in\mathbb{Z}}\sum\limits_{i=0}^{r-1}f^{(i)}(x_n+\rho l)\Theta_{ni}(t-\rho l),
\end{align*}
where the interpolating kernels $\Theta_{ni}$ are defined as in \eqref{Psi2}.
This sampling formula motivates the introduction of
the sampling series
\begin{equation}\label{sampling series}
(S_Wf)(t):= \sum\limits_{l\in\mathbb{Z}}\sum\limits_{i=0}^{r-1}\sum\limits_{n=0}^{L-1}\frac{1}{W^i}f^{(i)}\left(\frac{x_n+\rho l}{W}\right)\Theta_{ni}(Wt-\rho l), 
\end{equation}
where $f$ belongs to a suitable class of real or complex valued functions. 

In order to explore the rate of convergence of the sampling series $S_Wf$ to its approximation of $f$ we analyze its reproducing kernel property, which plays a key role in determining the accuracy of the approximation. The sampling operator $S_Wf$ is said to satisfy the \textit{reproducing polynomial property} of order $\kappa-1$ if $$S_Wf(t)=f(t), \text{ for all polynomials $f$ of degree } \leq \kappa-1.$$ The following lemma provides a necessary and sufficient condition for this property,  with its proof given in the Appendix.

\begin{lem}\label{lemma4.1}
Let $\varphi\in \mathcal{A}_r$ be a stable generator for $V(\varphi)$ and $X=\{x_n+\rho \ell:~0\leq n\leq L-1,~l\in\mathbb{Z}\}$ be a CIS of order $r-1$ for $V(\varphi).$
Then the following statements are equivalent.
\begin{itemize}
\item[$(i)$] The sampling operator $S_W$ defined in \eqref{sampling series} satisfies the reproducing polynomial property of order $\kappa-1.$
\item[$(ii)$] For $j=0,1,\dots,\kappa-1,$
\begin{equation}\label{vanishingmoment1}
\sum\limits_{i=0}^{r-1}\tbinom{j}{i} i!\sum\limits_{l\in\mathbb{Z}}\sum\limits_{n=0}^{L-1}\left(x_n+\rho l-t\right)^{j-i}\Theta_{ni}(t-\rho l)=\delta_{j0},~~ t\in \mathbb{R}.
\end{equation}
\item [$(iii)$] For $j=0,1,\dots,\kappa-1$ and $l\in\mathbb{Z},$
\begin{equation}\label{vanishingmoment2}
\sum_{i=0}^{r-1} \binom{j}{i} \, i! \, \frac{1}{(2\pi \mathrm{i})^{\, j-i}}\;
F_{li}^{(j-i)}(l/\rho)
=\rho \, \delta_{l0}\, \delta_{j0},
\end{equation}
where $F_{li}(w)=\sum\limits_{n=0}^{L-1}
e^{\, 2\pi \mathrm{i} x_n \left(w - \frac{l}{\rho}\right)}
\, \widehat{\Theta}_{n i}(w).$

\end{itemize}
\end{lem}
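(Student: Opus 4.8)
The plan is to prove the two equivalences $(i)\Leftrightarrow(ii)$ and $(ii)\Leftrightarrow(iii)$ separately; the first is a direct computation with the sampling series, the second is a Fourier/periodization argument.

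For $(i)\Leftrightarrow(ii)$, I would first reduce the reproducing polynomial property to a statement about monomials re-centered at the evaluation point. Since each $\Theta_{ni}$ has compact support, $(S_Wf)(t)$ is a finite sum for every $t$, so $S_W$ is well defined on polynomials. For a fixed $t$, the shifted monomials $g_{j,t}(s):=(s-t)^j$, $0\le j\le \kappa-1$, form a basis of the polynomials of degree $\le\kappa-1$, and Taylor expansion $p(s)=\sum_{j=0}^{\kappa-1}\tfrac{p^{(j)}(t)}{j!}(s-t)^j$ together with linearity shows that $S_W$ reproduces $\mathbb{P}_{\kappa-1}$ if and only if $(S_Wg_{j,t})(t)=\delta_{j0}$ for every $t\in\mathbb{R}$ and every $0\le j\le\kappa-1$. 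Then, using the elementary identity $g_{j,t}^{(i)}(s)=\binom{j}{i}\,i!\,(s-t)^{j-i}$ and the definition \eqref{sampling series}, a short computation gives
\[
(S_Wg_{j,t})(t)=\frac{1}{W^{\,j}}\sum_{i=0}^{r-1}\binom{j}{i}i!\sum_{l\in\mathbb{Z}}\sum_{n=0}^{L-1}\bigl(x_n+\rho l-Wt\bigr)^{j-i}\,\Theta_{ni}(Wt-\rho l),
\]
so that, after cancelling $W^{-j}$ and writing $Wt$ as a free real variable (legitimate since $W\ge 1$ is fixed and $t$ ranges over $\mathbb{R}$), the requirement $(S_Wg_{j,t})(t)=\delta_{j0}$ for all $t$ is exactly \eqref{vanishingmoment1}. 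This gives $(i)\Leftrightarrow(ii)$.

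For $(ii)\Leftrightarrow(iii)$, I would substitute $s=t-\rho l$ in the inner $l$-sum of \eqref{vanishingmoment1}, under which $x_n+\rho l-t=x_n-s$; this identifies the left-hand side of \eqref{vanishingmoment1} with the $\rho$-periodization $\sum_{l\in\mathbb{Z}}G_j(t-\rho l)$ of
\[
G_j(s):=\sum_{i=0}^{r-1}\binom{j}{i}i!\sum_{n=0}^{L-1}(x_n-s)^{j-i}\,\Theta_{ni}(s).
\]
Because each $\Theta_{ni}\in\mathcal{A}_r$ has compact support, $G_j$ is compactly supported and continuous, so its $\rho$-periodization is a continuous periodic function with $m$-th Fourier coefficient $\rho^{-1}\widehat{G_j}(m/\rho)$ (cf.\ \eqref{poissonsum}); hence \eqref{vanishingmoment1} holds for all $t$ if and only if $\widehat{G_j}(m/\rho)=\rho\,\delta_{j0}\delta_{m0}$ for every $m\in\mathbb{Z}$. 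It then remains to compute $\widehat{G_j}(l/\rho)$. The key tool is the modulation/differentiation identity
\[
\mathcal{F}\bigl[(x_n-\cdot)^{k}\Theta_{ni}\bigr](w)=\frac{1}{(2\pi \mathrm{i})^{k}}\,e^{-2\pi \mathrm{i}x_n w}\,\frac{d^{k}}{dw^{k}}\Bigl[e^{2\pi \mathrm{i}x_n w}\,\widehat{\Theta}_{ni}(w)\Bigr],
\]
proved by induction on $k$ from the case $k=1$, namely $\mathcal{F}[(x_n-s)g(s)](w)=\tfrac{1}{2\pi \mathrm{i}}e^{-2\pi \mathrm{i}x_n w}\tfrac{d}{dw}[e^{2\pi \mathrm{i}x_n w}\widehat{g}(w)]$. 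Writing $F_{li}(w)=\sum_{n}e^{-2\pi \mathrm{i}x_n l/\rho}\,e^{2\pi \mathrm{i}x_n w}\,\widehat{\Theta}_{ni}(w)$, differentiating $j-i$ times and evaluating at $w=l/\rho$, this identity yields $\widehat{G_j}(l/\rho)=\sum_{i=0}^{r-1}\binom{j}{i}i!\,(2\pi \mathrm{i})^{-(j-i)}F_{li}^{(j-i)}(l/\rho)$, so the Fourier-coefficient condition above is precisely \eqref{vanishingmoment2} ranging over all $l\in\mathbb{Z}$. This establishes $(ii)\Leftrightarrow(iii)$.

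I expect the main obstacle to be the bookkeeping in the last step: correctly matching the binomial coefficients coming from the expansion of $(x_n-s)^{j-i}$ with those produced by Leibniz's rule when differentiating $e^{2\pi \mathrm{i}x_n w}\widehat{\Theta}_{ni}(w)$, and lining the result up term-by-term with the stated definition of $F_{li}$ (including the role of the phase $e^{2\pi\mathrm{i}x_n(w-l/\rho)}$ that forces the evaluation at $w=l/\rho$). Secondary points requiring care are the well-definedness of $S_W$ on polynomials and the legitimacy of passing to Fourier coefficients of the periodization — both of which are handled by the compact support of the interpolating kernels in the present setting.
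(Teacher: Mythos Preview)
Your proposal is correct and follows essentially the same route as the paper: a direct computation linking the reproducing property to the identity \eqref{vanishingmoment1} for $(i)\Leftrightarrow(ii)$, and recognition of the left side of \eqref{vanishingmoment1} as a $\rho$-periodic function whose Fourier coefficients yield \eqref{vanishingmoment2} for $(ii)\Leftrightarrow(iii)$. Your use of the centered monomials $(s-t)^j$ and the modulation/differentiation identity streamlines the bookkeeping relative to the paper's explicit binomial and Leibniz expansions, but the argument is structurally identical.
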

Equation~\eqref{vanishingmoment1} reduces to the classical vanishing moment conditions discussed in \cite{BBSV2} when $r=1$ and $L=1$, and to the conditions discussed in \cite{Priyanka} when $L=1$. Therefore, we refer to \eqref{vanishingmoment1} as the generalized vanishing moment conditions for the interpolating kernels $\Theta_{ni}$.

We now assume that $\varphi\in \mathcal{A}_r$ is a  stable generator for $V(\varphi)$ with support $[0,\mu]$. Additionally, let
$$x_0,x_1,\dots, x_{L-1} \in [s,s+1) \text{ for some } s\in  \{0,1,\dots,\rho-1\} \text{ and } \rho=Lr\geq \mu.$$
Recall that if $X= \big\{x_n+\rho l:0\leq n\leq L-1,~l\in\mathbb{Z}\big\}$ is a CIS of order $r-1$ for $V(\varphi)$,  then the interpolating kernels $\Theta_{ni}$ in \eqref{sampling series} are compactly supported on the interval $\left[-\rho+s+1,\mu+s\right].$  
To determine the rate of approximation of the sampling series $S_Wf$ when the interpolating kernels have compact support,  we utilize \eqref{vanishingmoment1} along with a Taylor expansion. Once this is established, we apply our interpolation theorem to prove the following result. For the ease of presentation of the paper, a complete proof is provided in the Appendix.
\begin{thm}\label{approximation operator}
Let $\varphi \in \mathcal{A}_r$ be a stable generator for $V(\varphi)$ with support $[0,\mu]$. Additionally, let
$$x_0,x_1,\dots, x_{L-1} \in [s,s+1) \text{ for some } s\in  \{0,1,\dots,\rho-1\} \text{ and } \rho=Lr \geq \mu.$$
and $X=\{x_n+\rho\ell:~0\leq n\leq L-1,~l\in\mathbb{Z}\}$ be a CIS of order $r-1$ for $V(\varphi).$ If $S_W$ satisfies the reproducing polynomial property of order $r-1,$ then for each $f\in\Lambda^p_{r}$, there holds the error estimate 
\begin{equation}
\|S_{W}f-f\|_p\leq \sum\limits_{i=0}^{r-1}c_i\tau_r\left(f^{(i)};\dfrac{1}{W}\right)_p, ~ \text{for every}~ W>0,
\end{equation}
where the constants $c_i,$ $i=0,1,\dots, r-1$  depend only on $K_1$ and $K_2.$ 
In addition, if  $\tau_r(f^{(i)},W^{-1})_p=\mathcal{O}(W^{-\alpha})$ as $W\to \infty$ for each $i=0,1,\dots,r-1,$ then
\begin{equation}
\|S_Wf-f\|_p= \mathcal{O}(W^{-\alpha}) \quad (W\to\infty).
\end{equation}
\end{thm}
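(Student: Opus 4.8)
\textbf{Proof plan for Theorem \ref{approximation operator}.}
The strategy is to verify that the sampling operator $S_W$, restricted to the setting where the interpolating kernels $\Theta_{ni}$ are compactly supported on $[-\rho+s+1,\mu+s]$, satisfies the two hypotheses $(i)$ and $(ii)$ of the interpolation theorem (Theorem \ref{interpolation theorem}), and then invoke that theorem directly. The final asymptotic statement is then immediate: if $\tau_r(f^{(i)};W^{-1})_p = \mathcal{O}(W^{-\alpha})$ for each $i=0,\dots,r-1$, then summing the finitely many terms $c_i\tau_r(f^{(i)};W^{-1})_p$ preserves the rate, giving $\|S_Wf-f\|_p = \mathcal{O}(W^{-\alpha})$.

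For hypothesis $(i)$, I would estimate $\|S_Wf\|_p$ directly from the definition \eqref{sampling series}. Because each $\Theta_{ni}$ is supported on an interval of fixed finite length $\rho+\mu-1$ independent of $W$, for any fixed $t$ only a bounded number (independent of $W$, $n$, $i$) of the translates $\Theta_{ni}(Wt-\rho l)$ are nonzero; combined with $\Theta_{ni}\in\mathcal{A}_r\subset L^\infty(\mathbb{R})$ and a standard argument bounding the $L^p$-norm of a sum of boundedly-overlapping translated bumps by the $\ell^p$-norm of the coefficient sequence (the same device used in \cite{BBSV1}), one obtains $\|S_Wf\|_p\leq K_1\|f\|_{\ell^p_r(W)}$ with $K_1$ depending only on $r$ (through $\rho=Lr$, $\mu$, and $\max_{n,i}\|\Theta_{ni}\|_\infty$). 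The scaling by $W^{-i}$ on the $i$-th derivative sample and the factor $W^{-1/p}$ hidden in the $\ell^p_r(W)$-norm \eqref{lpsigma} are matched precisely by the change of variables $u=Wt$ in the integral defining $\|S_Wf\|_p$, so the constant is genuinely $W$-independent.

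For hypothesis $(ii)$, let $g\in W_p^r$. The key input is the reproducing polynomial property of order $r-1$, which via Lemma \ref{lemma4.1} is equivalent to the generalized vanishing moment conditions \eqref{vanishingmoment1}. Fixing $t$ and writing a Taylor expansion of $g$ about $t$ with integral remainder of order $r$, I would substitute the expansion of $g^{(i)}(x_n+\rho l)$ (itself a Taylor polynomial in $(x_n+\rho l - t)$ of the appropriate degree plus remainder) into $S_Wg(t)$; the polynomial parts collapse to exactly $g(t)$ by \eqref{vanishingmoment1}, leaving only a sum of remainder terms. Each remainder involves $g^{(r)}$ integrated against a kernel supported, after the $W$-rescaling, on an interval of length $\mathcal{O}(1/W)$ around $t$; estimating the resulting convolution-type operator in $L^p$ by Young's inequality (again using compact support and boundedness of the $\Theta_{ni}$, and the fact that the relevant moments are finite) yields $\|S_Wg-g\|_p\leq K_2 W^{-r}\|g^{(r)}\|_p$ with $K_2$ depending only on $r$. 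With $(i)$ and $(ii)$ in hand, Theorem \ref{interpolation theorem} delivers the stated error estimate for every $f\in\Lambda^p_r$.

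The main obstacle is the bookkeeping in step $(ii)$: one must carefully track the $W$-powers attached to each derivative order $i$ (the $W^{-i}$ in \eqref{sampling series}), the shift $\rho l$ versus the scaled argument $Wt$, and the degree of the Taylor polynomial needed so that the vanishing moment identities \eqref{vanishingmoment1} apply for all $j=0,\dots,r-1$; a clean way to organize this is to first prove the pointwise identity $S_W p = p$ for polynomials $p$ of degree $\le r-1$ as a consequence of Lemma \ref{lemma4.1}, then apply it to the Taylor polynomial of $g$ and estimate the remainder. The estimate for $\|S_Wf\|_p$ in step $(i)$ is comparatively routine given the compact-support structure, but one should be careful that the overlap bound and hence $K_1$ does not secretly depend on $W$ — it does not, precisely because the supports have fixed length.
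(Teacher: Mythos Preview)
Your proposal is correct and follows essentially the same route as the paper's proof: verify hypotheses $(i)$ and $(ii)$ of Theorem~\ref{interpolation theorem} using, respectively, the compact support of the $\Theta_{ni}$ (the paper packages the overlap bound as $m(\Theta_{ni}):=\sup_u\sum_l|\Theta_{ni}(u-\rho l)|<\infty$ together with H\"older's inequality) and the Taylor expansion with integral remainder combined with the vanishing-moment identities of Lemma~\ref{lemma4.1}, then invoke the interpolation theorem. The only cosmetic differences are that the paper substitutes Taylor's formula for each $g^{(i)}$ directly into $S_Wg$ rather than first appealing to $S_Wp=p$, and closes the remainder estimate with the generalized Minkowski inequality rather than Young's inequality.
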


In order to predict a signal by the sampling series \eqref{sampling series}, we need to assume that the generator $\varphi$ should be compactly supported in the interval $(0,\infty)$. In the following, we will construct compactly supported  interpolating kernels having support $(0,\infty)$ that satisfy the generalized vanishing moment condition of order $r-1$ using the perfect reconstruction formula and hence we predict a signal from finite number of sample points. The following lemma provides a key framework for this construction.
\begin{lem}\label{lemma4.2}
Consider the Vandermonde system
\begin{equation}\label{Vandermonde}
\sum_{p=0}^{\rho-1} a_p (-\epsilon_p)^j = \delta_{j0}, 
\qquad j = 0,1,\dots,\rho-1,    
\end{equation}
where \(\epsilon_0 < \epsilon_1 < \dots < \epsilon_{\rho-1}\) are given real numbers. 
The solution $(a_p)_{p=0}^{\rho-1}$ is unique and coincides with the Lagrange interpolation weights at \(x=0\):
\begin{equation}\label{a_p2}
  a_p = \prod_{\substack{q=0 \\ q\neq p}}^{\rho-1} \frac{\epsilon_q}{\epsilon_q - \epsilon_p}, \quad p=0,1,\dots,\rho-1.  
\end{equation}
\end{lem}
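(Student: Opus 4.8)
The plan is to recognize the system \eqref{Vandermonde} as a transposed Vandermonde system and solve it directly by exhibiting the solution and verifying uniqueness. First I would observe that the coefficient matrix is $M = \left[(-\epsilon_p)^j\right]_{j,p=0}^{\rho-1}$, which is (the transpose of) a Vandermonde matrix built on the $\rho$ distinct nodes $-\epsilon_0,\dots,-\epsilon_{\rho-1}$. Since the $\epsilon_p$ are strictly increasing, the nodes $-\epsilon_p$ are distinct, so $\det M = \prod_{q>p}\bigl((-\epsilon_q)-(-\epsilon_p)\bigr)\neq 0$; hence the system has a unique solution, which settles the uniqueness claim.

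Next I would identify that unique solution with the Lagrange interpolation weights at $x=0$. Consider the Lagrange basis polynomials associated with the nodes $-\epsilon_0,\dots,-\epsilon_{\rho-1}$, namely $\ell_p(x)=\prod_{q\neq p}\frac{x-(-\epsilon_q)}{(-\epsilon_p)-(-\epsilon_q)}=\prod_{q\neq p}\frac{x+\epsilon_q}{\epsilon_q-\epsilon_p}$. Evaluating at $x=0$ gives exactly the proposed formula $a_p=\prod_{q\neq p}\frac{\epsilon_q}{\epsilon_q-\epsilon_p}$. The defining property of Lagrange interpolation is that for any polynomial $P$ of degree $\le\rho-1$ one has $P(x)=\sum_{p=0}^{\rho-1}P(-\epsilon_p)\,\ell_p(x)$; applying this at $x=0$ with $P(x)=x^j$, $j=0,1,\dots,\rho-1$, yields $\delta_{j0}=0^j=\sum_{p=0}^{\rho-1}(-\epsilon_p)^j\,a_p$, which is precisely \eqref{Vandermonde}. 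Thus the stated $a_p$ solve the system, and by the uniqueness established above they are \emph{the} solution.

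There is essentially no hard part here: the result is a classical fact about Vandermonde systems and Lagrange interpolation, and both directions (existence via the explicit formula, uniqueness via nonvanishing of the Vandermonde determinant) are routine. The only mild subtlety worth a sentence is the handling of the $j=0$ case, where $0^0$ should be read as $1$ so that the equation $\sum_p a_p=1$ is consistent with the partition-of-unity property $\sum_p \ell_p(x)\equiv 1$; this is automatic from the polynomial identity at $x=0$. If one prefers to avoid even invoking the general Lagrange formula, an equivalent route is to verify \eqref{a_p2} directly by computing $\sum_p a_p(-\epsilon_p)^j$ as a contour-integral / partial-fraction expansion of $\prod_q(x+\epsilon_q)^{-1}$, but the Lagrange-interpolation argument is cleaner and I would present that.
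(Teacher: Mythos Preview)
Your proposal is correct and follows essentially the same approach as the paper: both define the Lagrange basis polynomials $\ell_p$ at the nodes $-\epsilon_p$, evaluate $\ell_p(0)$ to obtain the formula for $a_p$, apply the exact Lagrange interpolation identity to the monomials $x^j$ at $x=0$ to verify \eqref{Vandermonde}, and invoke invertibility of the Vandermonde matrix for uniqueness. The only cosmetic difference is that you establish uniqueness first and add a remark on the $0^0$ convention, neither of which changes the substance.
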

\begin{proof}
Define the Lagrange basis for the nodes $x_p=-\epsilon_p$:
$$
\ell_p(x)=\prod_{\substack{q=0 \\q\neq p}}^{\rho-1}\frac{x-x_q}{x_p-x_q}, \quad p=0,1,\dots,\rho-1.
$$ 
Evaluating at $x=0$ gives
$\ell_p(0)=a_p.$ For any $j=0,\dots,\rho-1$, the polynomial $f(x)=x^j$  satisfies the exact Lagrange interpolation formula:
$$
f(x)=\sum_{p=0}^{\rho-1} f(x_p)\,\ell_p(x).
$$
Evaluating at $x=0$ yields \eqref{Vandermonde}.
Since the Vandermonde matrix is invertible, this solution is unique.
\end{proof}
\begin{rem}
If the nodes are equally spaced, $\epsilon_p = \epsilon_0 + p\,d$ with $d_0 = \epsilon_0/d$, 
the weights can be expressed explicitly in terms of the Gamma function:
\begin{equation}
a_p = \frac{(-1)^p}{p!\,(\rho-1-p)!\,(d_0 + p)} \, \frac{\Gamma(d_0 + \rho)}{\Gamma(d_0)}, 
\qquad p = 0,1,\dots,\rho-1,
\end{equation}
where $\Gamma(\cdot)$ denotes the Gamma function.
\end{rem}
We now introduce the modified interpolating kernels defined by
\begin{equation}
\widetilde{\Theta}_{ni}(t) = \sum_{p=0}^{\rho-1} a_p\, \Theta_{ni}(t - \epsilon_p), 
\quad 0 \le n \le L-1, \; 0 \le i \le r-1,
\end{equation}
where the functions $\Theta_{ni}$ are as in \eqref{eqn3.12} and the coefficients $a_p$ are precisely those given by Lemma~\ref{lemma4.2}. 
It can be shown easily that the support of $\widetilde{\Theta}_{ni}$ is
$$
\operatorname{supp}(\widetilde{\Theta}_{ni}) = \bigl[ -\rho + s + 1 + \epsilon_0,\ \mu + s + \epsilon_{\rho-1} \bigr].
$$
To guarantee that $\widetilde{\Theta}_{ni}$ is supported entirely in $(0,\infty)$, we assume $\epsilon_0 \ge \rho$.

Let us introduce the operator $(\widetilde{S}_Wf)$ which is defined as
\begin{align}\label{eq4.6}
(\widetilde{S}_Wf)(t):= \sum\limits_{l\in \mathbb{Z}}\sum\limits_{i=0}^{r-1}\sum\limits_{n=0}^{L-1}\frac{1}{W^i}f^{(i)}\left(\frac{x_n+\rho l}{W}\right)\widetilde{\Theta}_{ni}(Wt-\rho l).
\end{align}
Since the support of $\widetilde{\Theta}_{ni}$ is
$\left[ -\rho + s + 1 + \epsilon_0,\ \mu + s + \epsilon_{\rho-1} \right],$
the summation extends only over those values of $l \in \mathbb{Z}$ for which
$$-\rho+s+1+\epsilon_0 \leq Wt-\rho l \leq \mu+s+\epsilon_{\rho-1}.$$
This condition is equivalent to restricting $l$ to the set 
$$\Omega_t:=\left\{l \in \mathbb{Z} : t-\frac{\mu+s+\epsilon_{\rho-1}-x_n}{W}\leq \frac{x_n+\rho l}{W}\leq t- \frac{-\rho+s+1+\epsilon_0-x_n}{W}\right\}.$$ Since $-\rho+s+1+\epsilon_0-x_n > 0$ and the number of elements in $\Omega_t$ is at most $2+\left\lfloor \frac{\mu-1+\epsilon_{p-1}-\epsilon_0}{\rho}\right\rfloor$, evaluating the sampling series at any given time requires at most  $\rho\left(2+\left\lfloor \frac{\mu-1+\epsilon_{p-1}-\epsilon_0}{\rho}\right\rfloor\right)$ samples from the past.  Thus the series \eqref{eq4.6} takes the form
\begin{equation}\label{prediction operator}
(\widetilde{S}_Wf)(t)= \sum\limits_{l\in \Omega_t}\sum\limits_{i=0}^{r-1}\sum\limits_{n=0}^{L-1}\frac{1}{W^i}f^{(i)}\left(\frac{x_n+\rho l}{W}\right)\widetilde{\Theta}_{ni}(Wt-\rho l),
\end{equation} which is referred to as the prediction operator.  The operator $(\widetilde{S}_Wf)(t)$ uses past values of the function $f$, its derivatives, and a series of time shifts to produce an estimate of the function at time $t$.

If $S_W$ satisfies the reproducing polynomial property of order $r-1$, then we show that the prediction operator $\widetilde{S}_W$  also satisfies the reproducing polynomial property of the same order. In order to prove this, it is enough to  show that $\widetilde{S}_Wf$ satisfies the condition $(iii)$ of Lemma \ref{lemma4.1} for $\widetilde{\Theta}_{ni}$.
Observe  that the Fourier transform of $\widetilde{\Theta}_{ni}(t)$ can be expressed as $$\widehat{\widetilde{\Theta}}_{ni}(w)= P(w)\widehat{\Theta}_{ni}(w),$$
where $P(w)=\sum\limits_{p=0}^{\rho-1} a_p e^{-2 \pi \mathrm{i}\epsilon_p w}.$ It is clear from \eqref{Vandermonde} that $P^{(j)}(0)=(2 \pi \mathrm{i})^j \delta_{j0}.$

Let us define $$\widetilde{F}_{li}(w):=\sum\limits_{n=0}^{L-1}
e^{\, 2\pi \mathrm{i} x_n \left(w - \frac{l}{\rho}\right)}
\, \widehat{\widetilde{\Theta}}_{n i}(w)=P(w)F_{li}(w).$$
Using Leibniz rule for derivatives and changing the order of summation, we have
$$\sum\limits_{i=0}^{r-1} \binom{j}{i} \, i! \, \frac{1}{(2\pi \mathrm{i})^{\, j-i}}
\;
\widetilde{F}_i^{(j-i)}(l/\rho)$$
\begin{align}\label{eqn1.9}
    &=\sum_{i=0}^{r-1} \binom{j}{i} \, i! \, \frac{1}{(2\pi \mathrm{i})^{\, j-i}}\, \sum_{m=0}^{j-i} \binom{j-i}{m} P^{(m)}(l/\rho) F_{li}^{(j-i-m)}(l/\rho)\nonumber \\
&=\sum_{m=0}^{j} \binom{j}{m} \, \frac{1}{(2\pi \mathrm{i})^{\, m}}\; P^{(m)}(l/\rho) \sum_{i=0}^{r-1} \binom{j-m}{i}  \, i! \frac{1}{(2\pi \mathrm{i})^{\, j-i-m}} F_{li}^{(j-i-m)}(l/\rho),\nonumber\\
&=\sum_{m=0}^{j} \binom{j}{m} \, \frac{1}{(2\pi \mathrm{i})^{\, m}}\; P^{(m)}(l/\rho) \rho \delta_{l0} \delta_{j-m,0}\nonumber\\
&=\frac{1}{(2\pi \mathrm{i})^{j}} \rho \delta_{l0} P^{(j)}(l/\rho)\nonumber\\
&= \rho \delta_{l0}\delta_{j0},
\end{align}
where by convention, $\binom{j-m}{i} = 0$ whenever $i > j-m$. Consequently, we can derive the following algorithm. 

\begin{algorithm}[H]
\caption*{\textbf{Prediction Algorithm}}
\begin{algorithmic}
\State \textbf{Input:}
\begin{itemize}
    \item A $(r - 1)$-regular generator $\varphi$ for the space $V(\varphi)$, with support $[0, \mu]$.
    \item An offset vector $x=(x_0, x_1, \dots,x_{L-1})$, where $x_0, x_1, \dots, x_{L-1} \in [s, s+1)$ for some $s \in \{0, 1, \dots, \rho - 1\}$ such that $Lr =\rho \geq \mu$.
    \item Choose $\epsilon_p$ for $p=0,1, \dots, \rho-1$ such that $\rho \leq \epsilon_0 < \epsilon_1 < \dots < \epsilon_{\rho-1}.$
    \item Function $f$ with $\tau_r(f^{(i)},W^{-1})_p=\mathcal{O}(W^{-\alpha})$ to be approximated and a scaling parameter $W \geq 1$.
\end{itemize}

\State \textbf{Step 0:} Construct the matrix $C$  given by their entries
\[
C_{ij} = \varphi^{(i \bmod r)}\left(x_{\left\lfloor \frac{i}{r} \right\rfloor} - s - 1 + \rho - j\right), \text{ for } 0 \leq i,j \leq \rho-1. \]

\State \textbf{Step 1:} Verify that $\det C \neq 0$.

\State \textbf{Step 2:} Construct the polyphase matrix $\bm{\Psi} (t)$  with entries defined  for each $0 \leq i,j \leq \rho-1$  by 
$$
[\bm\Psi(t)]_{ij}=
\begin{cases}
\varphi^{(i \bmod r)}\left(x_{\left\lfloor \frac{i}{r} \right\rfloor} - j \right), & \text{if } 0 \leq j \leq s \\
e^{2 \pi it}\varphi^{(i \bmod r)}\left(x_{\left\lfloor \frac{i}{r} \right\rfloor}  + \rho - j\right), & \text{if } s+1 \leq j\leq \rho-1.
\end{cases}
$$
\State \textbf{Step 3:} Compute the inverse of the polyphase matrix $\bm{\Psi}(t).$
\State \textbf{Step 4:} Compute the matrices  $A$ and $B$ whose entries are given by
\[
A_{ij}=\widehat{(\Psi^{-1})}_{ij}(0), \quad B_{ij}=\widehat{(\Psi^{-1})}_{ij}(-1), \quad \text{for } 0 \leq i,j \leq \rho-1. 
\]
\State \textbf{Step 5:} Compute the interpolating kernels $\Theta_{ni}$ given by
\[
\Theta_{ni}(t)=\sum\limits_{q=0}^{s}A_{q, nr+i}\varphi(t-q)+\sum\limits_{q=s+1}^{\rho-1}B_{q,nr+i}\varphi(t+\rho-q),
\]
for $0 \leq n \leq L - 1$ and $0 \leq i \leq r- 1$.
\end{algorithmic}
\end{algorithm}
\begin{algorithm}
\begin{algorithmic}
\State \textbf{Step 6:} Decide  the largest value of $\kappa$ such that
for $j=0,1,\dots,\kappa-1,$
\begin{equation*}
\sum\limits_{i=0}^{r-1}\tbinom{j}{i} i!\sum\limits_{l\in\mathbb{Z}}\sum\limits_{n=0}^{L-1}\left(x_n+\rho l-t\right)^{j-i}\Theta_{ni}(t-\rho l)=\delta_{j0}.
\end{equation*}
\State \textbf{Step 7:} Compute 
$a_p = \prod\limits_{\substack{q=0 \\ q\neq p}}^{\rho-1} \frac{\epsilon_q}{\epsilon_q - \epsilon_p}, \text{ for }  p=0,1,\dots,\rho-1.$
\State \textbf{Step 8:} Compute the modified interpolating kernels $ \widetilde{\Theta}_{ni}(t),$ given by
\begin{equation*}
  \widetilde{\Theta}_{ni}(t)
=\sum_{p=0}^{\rho-1} a_p\, \Theta_{ni}(t - \epsilon_p), \text{ for } 0\leq n\leq L-1,~0\leq i\leq r-1.
\end{equation*}
\State \textbf{Output:}
\begin{itemize}
    \item The approximation of $f$ is given by \eqref{prediction operator}
with the error estimate
\begin{equation*}
\|\widetilde{S}_Wf-f\|_p= \mathcal{O}(W^{-\alpha})\quad (W\to\infty).
\end{equation*}
    \item The number of sample points from the past needed to evaluate $\widetilde{S_W} f$ at any given time is at most 
    $$\rho\left(2 + \left\lfloor \frac{\mu - 1+\epsilon_{p-1}-\epsilon_0}{\rho} \right\rfloor \right).$$
\end{itemize}

\end{algorithmic}
\end{algorithm}

\subsection{Numerical Implementation and Simulation}
We present the validation of theoretical results, focusing on cubic splines and the Daubechies scaling function of order 3. These functions are selected due to their fundamental role in approximation theory and their practical relevance in various applications such as signal processing and numerical analysis.

Simulations demonstrate the effectiveness of the proposed prediction algorithm when applied to these generators. To facilitate further exploration and deepen the understanding of the underlying theory, we provide MATLAB code available at the following link: \href{https://github.com/sreyasoman201/SreyaRiyaAntony}{MATLAB code for prediction algorithm}. This code can easily be adapted to different scenarios by modifying the generator and adjusting key parameters, allowing users to test additional examples and expand the results to a broader range of signals.

For the sake of simplicity, we assume that $s=0$.
We have chosen the offset vector $x=(x_0, x_1, \dots, x_{L-1})$ in two ways: equally spaced nodes or Chebyshev nodes. Chebyshev nodes are chosen because they are the best choice of interpolation points to minimize the error estimate of the Lagrange interpolation.

Recall that the Chebyshev polynomials of the first kind $T_k(x),$ $k\geq0$ on the interval $[-1, 1]$ are defined by the formula
$$T_k(x)=\cos(k\cos^{-1}x).$$
The polynomial $T_k(x)$ is of degree $k$ and all its roots lie in the interval $[-1, 1]$ which are given by $$x_n=\cos\frac{(2n+1)\pi}{2k},~n=0,1,\dots,k-1.$$
For an arbitrary interval $[a, b],$ the points 
$$x_n=\frac{a+b}{2}-\frac{b-a}{2}\cos\frac{(2n+1)\pi}{2k},~n=0,1,\dots,k-1,$$
are the best choice of interpolation points to minimize the error estimate of the Lagrange interpolation on this interval. The points $x_{k}$'s are called the Chebyshev nodes in the interval $[a, b].$
\newline
\textbf{\underline{Example 1}:}\label{example1}
Let us choose the PNS sets 
\begin{flalign*}
  &&X&=4\mathbb{Z}\cup \left(\frac{1}{4}+4\mathbb{Z}\right)\cup\left(\frac{1}{2}+4\mathbb{Z}\right)\cup \left(\frac{3}{4}+4\mathbb{Z}\right)\\
  \text{and} && Y&=\bigcup\limits_{n=0}^3 (x_n+4\mathbb{Z}),\text{ where } x_n=\frac{1}{2}-\frac{1}{2}\cos\frac{(2n+1)\pi}{8},~n=0,1,2,3.
\end{flalign*}
In this case, $L=4$ and $\rho=4$. 
%The B-spline $Q_4$ is a stable generator with lower and upper Riesz bounds $0.0539$ and $1$, respectively. 
 When $r=1$, it follows from Corollary \ref{cor 3.1} that both $X$ and $Y$ are complete interpolation sets of order $0$ for the space $V(Q_4)$. For the sampling set $X$, the polyphase matrix and its inverse are respectively given by
\begin{equation*}
\renewcommand{\arraystretch}{1.4}
    \bm{\Psi}(x) = 
\begin{bmatrix}
0 & \frac{z}{6} & \frac{2z}{3} & \frac{z}{6} \\ 
\frac{1}{384} & \frac{9z}{128} & \frac{235z}{384} & \frac{121z}{384}\\ 
\frac{1}{48} & \frac{z}{48} & \frac{23z}{48} & \frac{23z}{48} \\ 
\frac{9}{128} & \frac{z}{384} & \frac{121z}{384} & \frac{235z}{384} \\
\end{bmatrix} \text{ and }
%$\det\Psi(x)= z^{2}$.
\bm{\Psi}^{-1}(x) = 
\begin{bmatrix}
-19 & \frac{208}{3} & \frac{-260}{3} & \frac{112}{3} \\ 
\frac{19}{z} & -\frac{116}{3z} & \frac{82}{3z} & -\frac{20}{3z}  \\
-\frac{13}{3z} & \frac{40}{3z} & -\frac{32}{3z} & \frac{8}{3z} \\
\frac{13}{3z} & -\frac{44}{3z} & \frac{46}{3z} & -\frac{4}{z}
\end{bmatrix}.
\end{equation*}
% Furthermore, the lower and upper sampling bounds  are respectively given by $3.136\times10^{-6}$ and $1.742.$
The Fourier coefficients of $\bm{\Psi}^{-1}$ are given by
\begin{equation*}
\renewcommand{\arraystretch}{1.4}
\widehat{\bm{\Psi}}^{-1}(v) =
\begin{bmatrix}
-19 \delta_{v0} & \frac{208\delta_{v0}}{3} & \frac{-260 \delta_{v0}}{3}&\frac{112\delta_{v0}}{3}\\ 
19\delta_{v,-1} & -\frac{116 \delta_{v.-1}}{3} & \frac{82\delta_{v,-1}}{3} & \frac{-20\delta_{v,-1}}{3}\\ 
\frac{-13\delta_{v,-1}}{3} & \frac{40 \delta_{v.-1}}{3} & -\frac{32\delta_{v,-1}}{3} & \frac{8\delta_{v,-1}}{3}\\ 
\frac{13\delta_{v,-1}}{3} & \frac{-
44 \delta_{v.-1}}{3} & \frac{46\delta_{v,-1}}{3} & -4\delta_{v,-1}
\end{bmatrix},~v\in\mathbb{Z}.
\end{equation*}
Hence every $f\in V(Q_4)$ can be written as
\begin{align*}
f(t)=\sum_{n=0}^{3}\sum_{l\in\mathbb{Z}}f(x_n+4l)\Theta_{n0}(t-4l),
\end{align*}
where 
\begin{align*}
\Theta_{00}(t) &= -19 Q_4(t) + 19 Q_4(t+3) - \frac{13}{3} Q_4(t+2) + \frac{13}{3} Q_4(t+1), \\
\Theta_{10}(t) &= \frac{208}{3} Q_4(t) - \frac{116}{3} Q_4(t+3) + \frac{40}{3} Q_4(t+2) - \frac{44}{3} Q_4(t+1), \\
\Theta_{20}(t) &= -\frac{260}{3} Q_4(t) + \frac{82}{3} Q_4(t+3) - \frac{32}{3} Q_4(t+2) + \frac{46}{3} Q_4(t+1), \\
\Theta_{30}(t) &= \frac{112}{3} Q_4(t) - \frac{20}{3} Q_4(t+3) + \frac{8}{3} Q_4(t+2) - 4 Q_4(t+1),
\end{align*}
are the interpolating kernels with support $[-3,4]$. 

Similarly, the polyphase matrix and its inverse corresponding to the sampling set $Y$ can be computed.  Due to numerical precision, the matrix entries contain more digits and are therefore not explicitly included in the paper. However, both the matrices and the corresponding interpolating kernels can be generated using our MATLAB code. The graphs of the interpolating kernels for the sample sets $X$ and $Y$ are shown in Figure \ref{fig:interpolating_kernels}.
\begin{figure}[H]
\centering
\begin{subfigure}[!b]{0.5\textwidth}
    \centering
    \includegraphics[width=\textwidth]{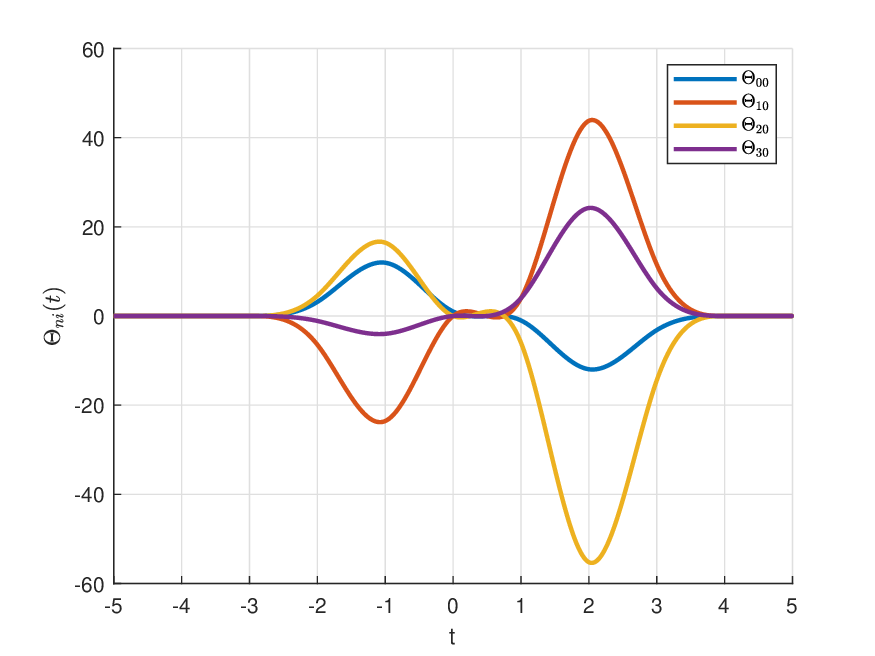}
    \caption{Equally spaced nodes}
\end{subfigure}
\hspace{-1cm}
\begin{subfigure}[!b]{0.5\textwidth}
    \centering
    \includegraphics[width=\textwidth]{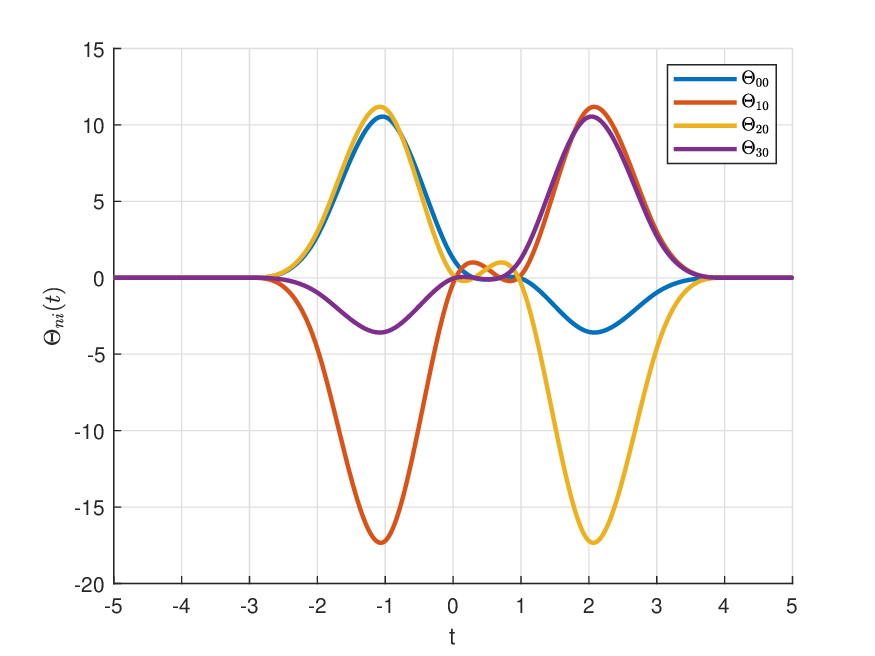}
    \caption{Chebyshev nodes}
\end{subfigure}
\caption{Interpolating kernels for Example 1}
\label{fig:interpolating_kernels}
\end{figure}
We now choose
$$
\epsilon_p = 4 + 0.25p, \qquad p = 0,1,2,3.
$$
Using~\eqref{a_p2}, we obtain
$$
(a_0, a_1, a_2, a_3) = (969,\,-2736,\,2584,\,-816).
$$
Thus, for \(0 \leq n \leq 3\), the modified interpolating kernels $\widetilde{\Theta}_{n0}$ are given by
$$\widetilde{\Theta}_{n0}(t) = \sum_{p=0}^{3} a_p\, \Theta_{n0}(t - \epsilon_p),$$
with support $[1,8.75].$
We also verified that $S_W$ satisfies the reproducing polynomial property of order 3 and hence $\widetilde{S_W}$ satisfies it as well. This property can also be checked using our MATLAB code for the sampling operator by testing it on monomials. Hence, it follows  that if  $f\in\Lambda_4^p$ such that  $\tau_4(f,W^{-1})_p=\mathcal{O}(W^{-\alpha})$ as $W\to \infty$,  then
\begin{equation*}
\|\widetilde{S}_Wf-f\|_p= \mathcal{O}(W^{-\alpha}) \quad (W\to\infty).\\
\end{equation*}
\noindent
\textbf{\underline{Example 2}:}
Consider the PNS set $X= (\frac{1}{2}+4\mathbb{Z})\cup (\frac{3}{4}+4\mathbb{Z})$ for the space $V(Q_4)$.  In this case, $L=2$ and $\rho=4$. When $r=2$, it follows from Corollary \ref{cor 3.1} that $X$ is CIS of order $1$ for $V(Q_4)$. The polyphase matrix and its inverse for the sampling set are respectively given by
\begin{equation*}
\renewcommand{\arraystretch}{1.4}
\bm{\Psi}(x) = 
\begin{bmatrix}
\frac{1}{48} & \frac{z}{48} & \frac{23z}{48} & \frac{23z}{48} \\ 
\frac{1}{8} & -\frac{z}{8} & -\frac{5z}{8} & \frac{5z}{8} \\ 
\frac{9}{128} & \frac{z}{384} & \frac{121z}{384} & \frac{235z}{384} \\ 
\frac{9}{32} & -\frac{z}{32} & -\frac{21z}{32} & \frac{13z}{32}
\end{bmatrix} \text{ and }
%$\det\Psi(t)= z^{2}$.
\bm{\Psi}^{-1}(x) = 
\begin{bmatrix}
149 & \frac{97}{6} & -148 & \frac{67}{3} \\ 
-\frac{331}{z} & -\frac{281}{6z} & \frac{332}{z} & -\frac{113}{3z} \\ 
\frac{53}{z} & \frac{37}{6z} & -\frac{52}{z} & \frac{19}{3z} \\ 
-\frac{43}{z} & -\frac{29}{6z} & \frac{44}{z} & -\frac{17}{3z}
\end{bmatrix}.
\end{equation*}
The Fourier coefficients of $\bm{\Psi}^{-1}$ are given by
\begin{equation*}
    \renewcommand{\arraystretch}{1.4}
\widehat{\bm{\Psi}}^{-1}(v) =
\begin{bmatrix}
149 \delta_{v0} & \frac{97\delta_{v0}}{6} & -148 \delta_{v0}&\frac{67\delta_{v0}}{3}\\ 
-331\delta_{v,-1} & -\frac{281 \delta_{v.-1}}{6} & 332\delta_{v,-1} & \frac{-113\delta_{v,-1}}{3}\\ 
53\delta_{v,-1} & \frac{37 \delta_{v.-1}}{6} & -52\delta_{v,-1} & \frac{19\delta_{v,-1}}{3}\\ 
-43\delta_{v,-1} & \frac{-
29 \delta_{v.-1}}{6} & 44\delta_{v,-1} & \frac{-17\delta_{v,-1}}{3}
\end{bmatrix},~v\in\mathbb{Z}.
\end{equation*}
Hence
\begin{align*}
\Theta_{00}(t) &= 149 Q_4(t) - 331 Q_4(t+3) + 53 Q_4(t+2) - 43 Q_4(t+1), \\
\Theta_{01}(t) &= \frac{97}{6} Q_4(t) - \frac{281}{6} Q_4(t+3) + \frac{37}{6} Q_4(t+2) - \frac{29}{6} Q_4(t+1), \\
\Theta_{10}(t) &= -148 Q_4(t) + 332 Q_4(t+3) - 52 Q_4(t+2) + 44 Q_4(t+1), \\
\Theta_{11}(t) &= \frac{67}{3} Q_4(t) - \frac{113}{3} Q_4(t+3) + \frac{19}{3} Q_4(t+2) - \frac{17}{3} Q_4(t+1).
\end{align*}
are the interpolating kernels with support $[-3,4]$.
Using the same shifts $\epsilon_p$ and the coefficients $a_p$ as in the previous example, the modified interpolating kernels $\widetilde{\Theta}_{ni}$, $n,i=0,1$ are given by
$$\widetilde{\Theta}_{ni}(t) = \sum_{p=0}^{3} a_p\, \Theta_{ni}(t - \epsilon_p),$$
with support $[1,8.75].$
Again using our MATLAB code, we verified that  $S_W$ satisfies the reproducing polynomial property of order $3$.
Applying Theorem \ref{approximation  operator}, it follows  that if  $f\in\Lambda_4^p$ such that  $\tau_4(f^{(i)},W^{-1})_p=\mathcal{O}(W^{-\alpha})$ as $W\to \infty$ for each $i=0,1,$ then
\begin{equation*}
\|\widetilde{S}_Wf-f\|_p= \mathcal{O}(W^{-\alpha}) \quad (W\to\infty).\\
\end{equation*}
\noindent
\textbf{\underline{Example 3}:}
Consider the PNS set $$X=\bigcup\limits_{n=0}^4 (x_n+5\mathbb{Z}),\text{ where } x_n=\frac{1}{2}-\frac{1}{2}\cos\frac{(2n+1)\pi}{10},~n=0,1,2,3,4.$$ In this case, $L=5$ and $\rho=5.$ We choose the Daubechies scaling function of order 3 (db3) as the generator. When $r=1$, it follows from Corollary \ref{cor 3.1} that $X$ is a CIS of order 0 for $V(\textrm{db3})$. The polyphase matrix and its inverse corresponding to the sampling set $X$ can be computed using our MATLAB code. The graphs of interpolating kernels are given in Fig. \ref{fig:interpolating_kernels2b}.
We now choose
$$
\epsilon_p = 5 +5p, \qquad p = 0,1,2,3,4.
$$
Using~\eqref{a_p2}, we obtain
$$
(a_0, a_1, a_2, a_3, a_4) = (5, -10, 10, -5, 1).
$$
Thus, for $0 \leq n \leq 4,$  the modified interpolating kernels $\widetilde{\Theta}_{n0}$ are given by
$$\widetilde{\Theta}_{n0}(t) = \sum_{p=0}^{4} a_p\, \Theta_{n0}(t - \epsilon_p),$$
with  support  $[1,11].$
We also verified that $S_W $ satisfies the reproducing polynomial property of order 3. Applying Theorem \ref{approximation  operator}, it follows  that if  $f\in\Lambda_4^p$ such that  $\tau_4(f,W^{-1})_p=\mathcal{O}(W^{-\alpha})$ as $W\to\infty$, then
\begin{equation*}
\|\widetilde{S}_Wf-f\|_p= \mathcal{O}(W^{-\alpha}) \quad (W\to\infty).\\
\end{equation*}
\noindent
\begin{figure}[H]
\centering
\begin{subfigure}[b]{0.48\textwidth}
    \centering
    \includegraphics[width=\textwidth]{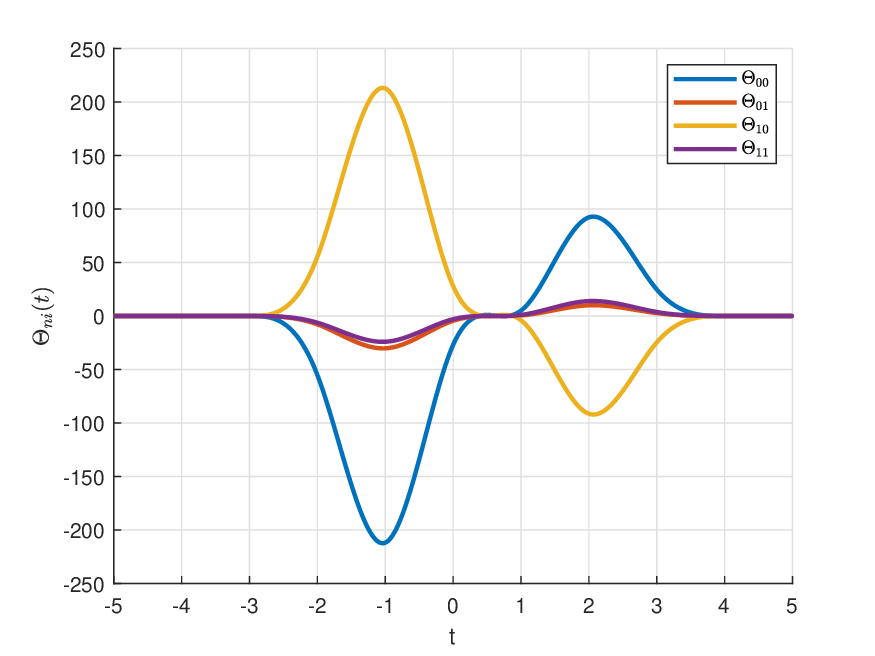}
    \caption{Example 2}
    \label{fig:interpolating_kernels2a}
\end{subfigure}
\hfill
\begin{subfigure}[b]{0.48\textwidth}
    \centering
    \includegraphics[width=\textwidth]{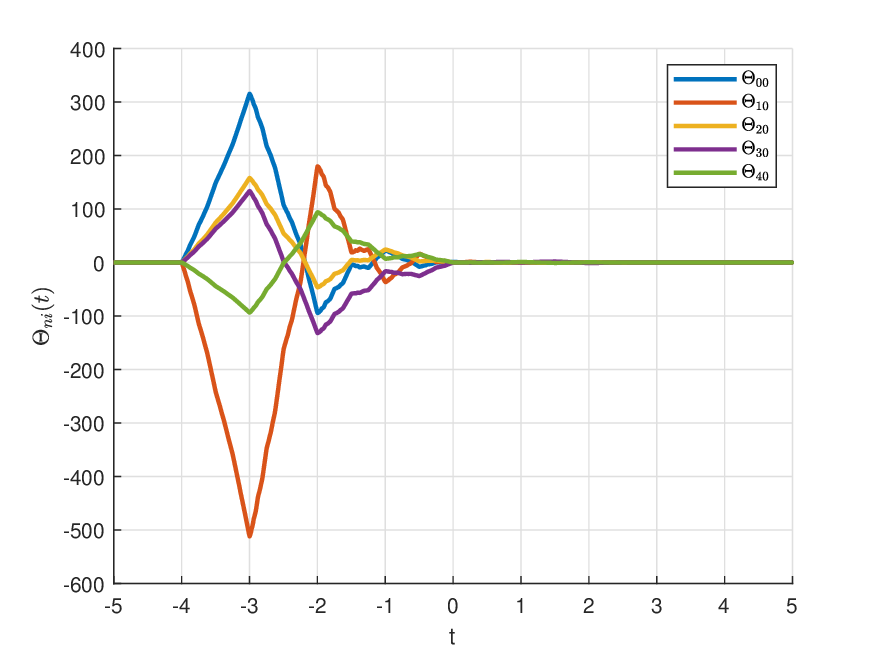}
    \caption{Example 3}
    \label{fig:interpolating_kernels2b}
\end{subfigure}
\caption{Interpolating kernels}
\label{fig:interpolating_kernels2}
\end{figure}

We now approximate two real-valued functions $f$ and $g$ using our prediction operator $\widetilde{S}_W$.
The first function $f :\mathbb{R}\to \mathbb{R}$ is given by
$$f(t)=e^{-t^2/4}\sin(2\pi t).$$
It is an infinitely differentiable function and $f_1\in\Lambda_{r}^p$ for any $r\in\mathbb{N}$ for any of the sample sets considered in the examples.
The second function is a discontinuous function given by
$$
g(x)=
\begin{cases}
-\dfrac{1}{2}x^{3}+2, & \text{if } -1.5<x<3,\\
0, & \text{otherwise}.
\end{cases}
$$
It belongs to $\Lambda_{1}^p$ for any of the sample sets considered in the examples. 

Figure \ref{figure3} displays the approximation of $f$ using  $\widetilde{S}_{W}f$ for various values of $W$, based on the PNS set with an offset vector consisting of equally spaced nodes, as described in Example 1. Similarly, Figure \ref{figure4} shows the approximation of $g$ under the same conditions. It is well known that approximating a discontinuous signal with a continuous one leads to poor behavior near points of discontinuity. This issue becomes particularly evident when using series approximations based on the interpolating kernels. The oscillations that appear near jump discontinuities are known as the Gibbs phenomenon. Increasing $W$ only narrows the region over which these oscillations occur; it does not reduce the height of the overshoot. This behavior has been observed in prior studies \cite{BBSV2}. Since our analysis is based on mean-square accuracy, the Gibbs phenomenon does not pose a significant problem in this context.

Figure \ref{figure5}  presents the approximations of $f$ using $ \widetilde{S}_{W}$ for different values of $W$, but this time based on the PNS set with an offset vector consisting of Chebyshev nodes, also from Example 1.
Table \ref{Approximation error} compares the $L^2$-approximation errors for $f$  based on both types of PNS sets.
Figure \ref{figure6} shows the approximation of $f$ using $\widetilde{S}_{W}f$ for various  $W$, based on the PNS set with equally spaced nodes, as described in Example 2.
Figures \ref{figure7}  illustrate the approximations of $f$  for varying $W$, based on the PNS set with Chebyshev nodes, as described in Example 3.
\begin{figure}[H]
    \centering
    \begin{subfigure}[b]{0.45\textwidth}
        \centering
        \includegraphics[height=4.5cm,width=\textwidth]{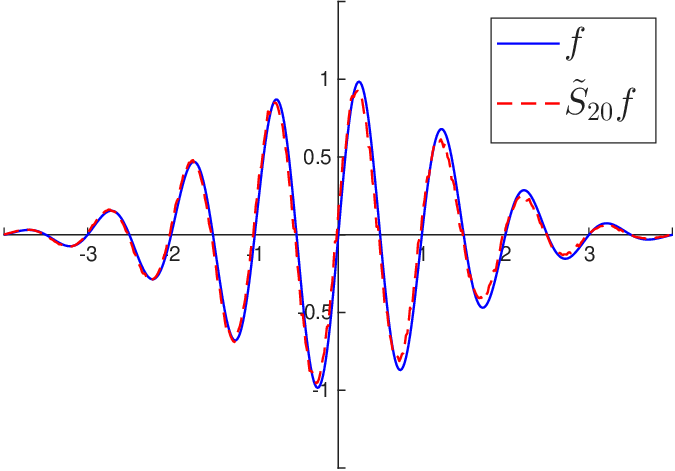}
    \end{subfigure}
    \hspace{0.5cm}
    \begin{subfigure}[b]{0.45\textwidth}
        \centering
        \includegraphics[height=4.5cm, width=\textwidth]{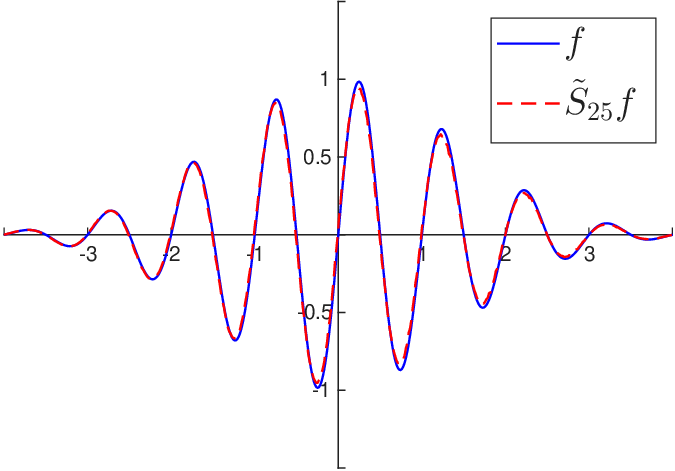}
    \end{subfigure}
    \begin{subfigure}[b]{0.45\textwidth}
        \centering
\includegraphics[height=4.5cm,width=\textwidth]{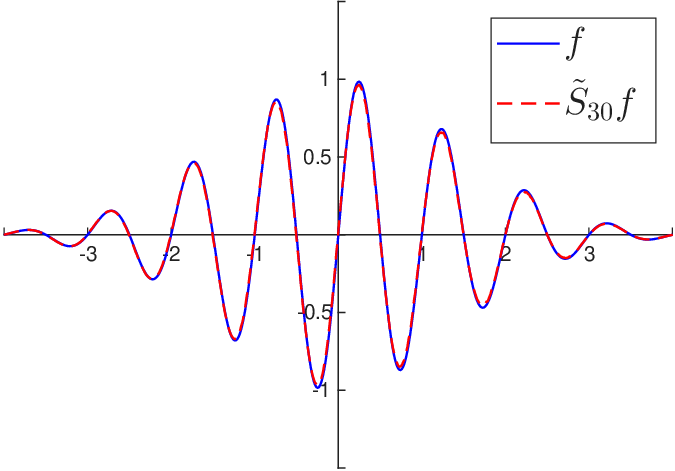}
    \end{subfigure}

    \caption{Approximations of $f$ using $\widetilde{S}_{W}f$ for equally spaced nodes in Example 1.}
    \label{figure3}
\end{figure}
\begin{figure}[H]
    \centering
    \begin{subfigure}[!b]{0.5\textwidth}
        \centering
        \includegraphics[height=4.5cm, width=\textwidth]{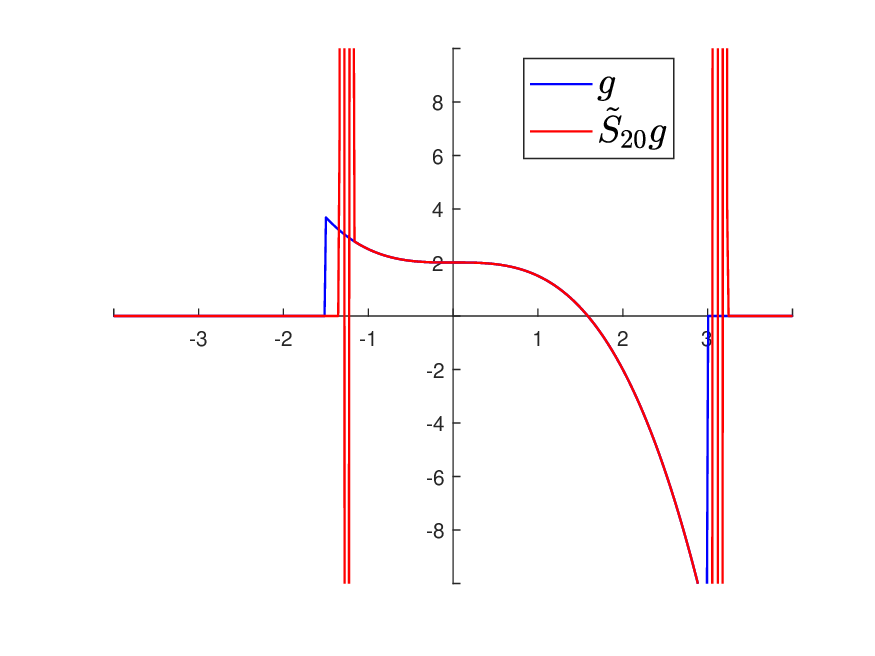}
        %\caption{Graphs of $f_2$ and $\widetilde{S}_{10}f_2$.}
    \end{subfigure}
    \hspace{-0.8cm}
    \begin{subfigure}[!b]{0.5\textwidth}
        \centering
        \includegraphics[height=4.5cm, width=\textwidth]{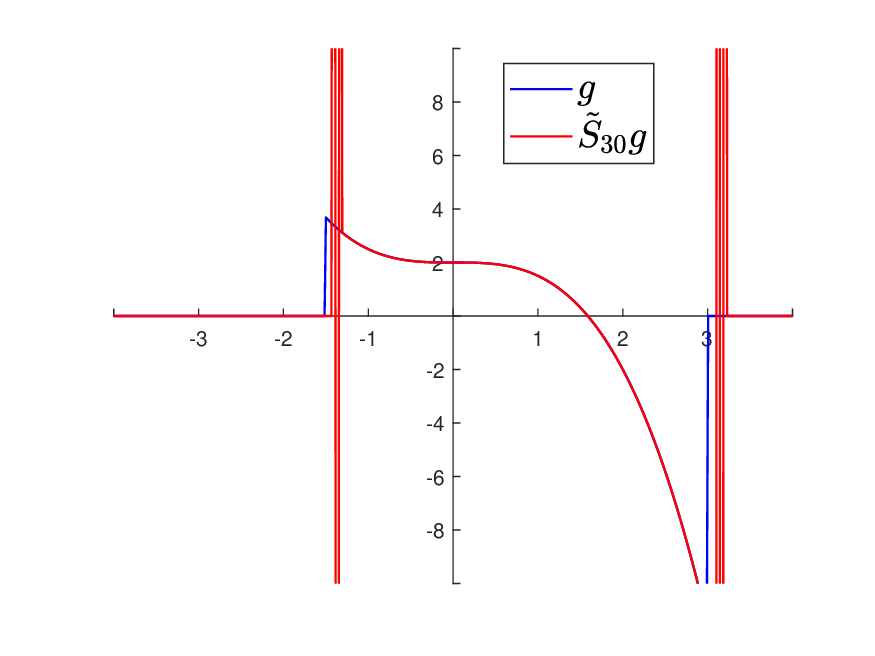}
       % \caption{Graphs of $f_2$ and $\widetilde{S}_{15}f_2$.}
    \end{subfigure}
    
    \caption{Approximations of $g$ based on $\widetilde{S}_{W}g$ for equally spaced nodes in Example 1.}\label{figure4}
\end{figure}

\begin{figure}[H]
    \centering

    \begin{subfigure}[b]{0.45\textwidth}
        \centering
        \includegraphics[height=4.5cm,width=\textwidth]{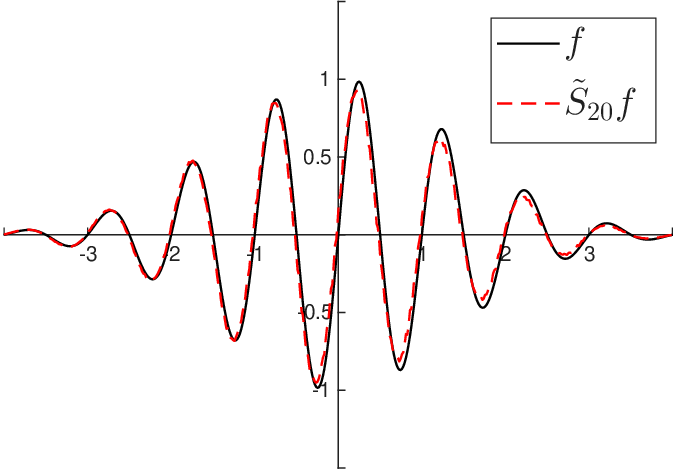}
    \end{subfigure}
    \hspace{0.5cm}
    \begin{subfigure}[b]{0.45\textwidth}
        \centering
        \includegraphics[height=4.5cm,width=\textwidth]{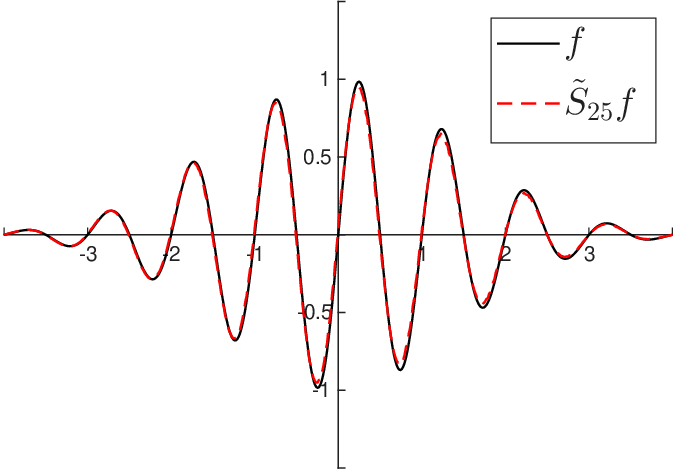}
    \end{subfigure}
    \begin{subfigure}[b]{0.45\textwidth}
        \centering
        \includegraphics[height=4.5cm,width=\textwidth]{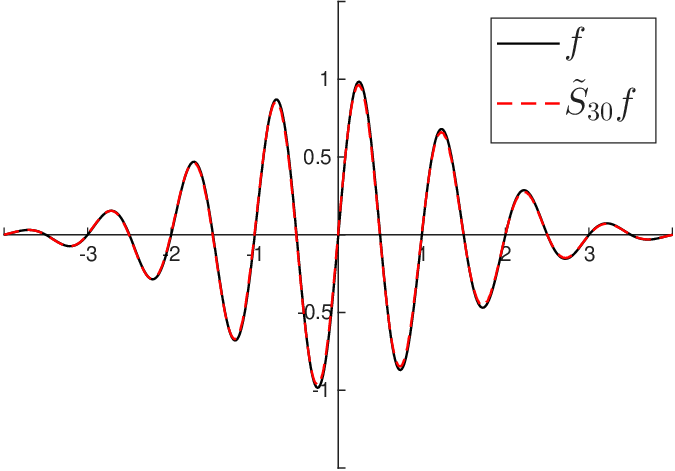}
    \end{subfigure}

    \caption{Approximations of $f$ using $\widetilde{S}_{W}f$ for Chebyshev nodes in Example 1.}
    \label{figure5}
\end{figure}

\begin{figure}[H]
    \centering

    % First row (20 and 25)
    \begin{subfigure}[b]{0.45\textwidth}
        \centering
        \includegraphics[height=4.5cm,width=\textwidth]{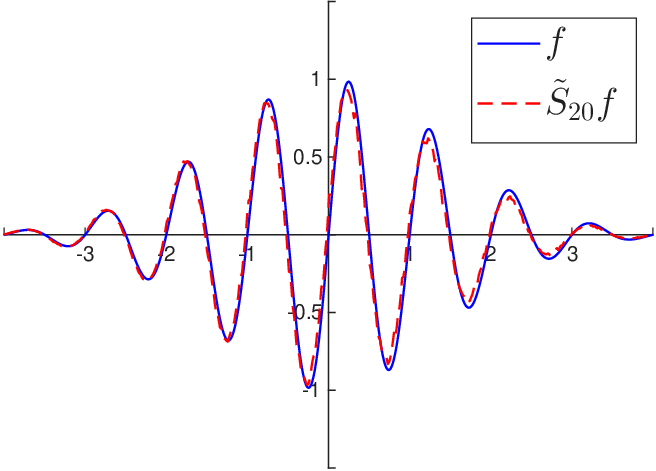}
    \end{subfigure}
    \hspace{0.5cm}
    \begin{subfigure}[b]{0.45\textwidth}
        \centering
        \includegraphics[height=4.5cm,width=\textwidth]{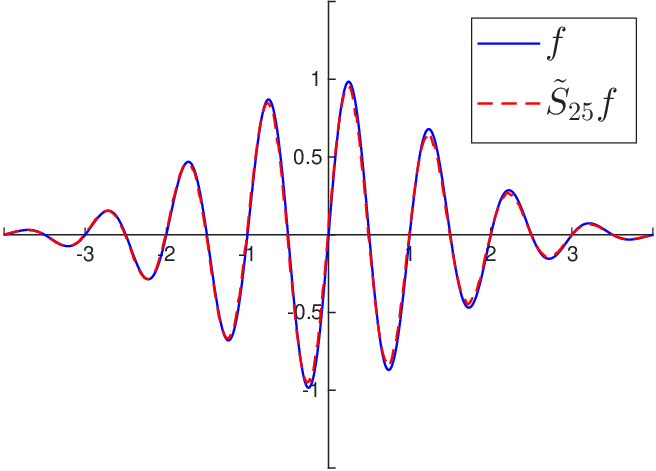}
    \end{subfigure}
\end{figure}
\begin{figure}\ContinuedFloat
    \begin{subfigure}[b]{0.45\textwidth}
        \centering
        \includegraphics[height=4.5cm,width=\textwidth]{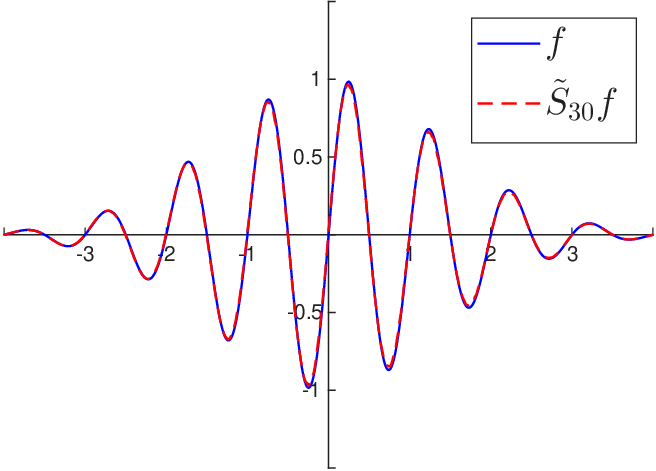}
    \end{subfigure}
    \caption{Approximations of $f$ based on $\widetilde{S}_{W}f$ for equally spaced nodes in Example~2.}
    \label{figure6}
\end{figure}

\begin{table}[h]
\centering
\begin{tabular}{|c||c|c|}
\hline
  $W$ 
  & \multicolumn{2}{c||}{$\|\widetilde{S}_Wf - f\|_2$} \\
  \hline
& Equally spaced & Chebyshev  \\
  \hline
   5 &  32.8862 & 32.1176 \\
  \hline
    7 &  9.9272 & 9.8323 \\
   \hline
   10 &  2.6441 & 2.6197 \\
   \hline
   15 &  0.55445 & 0.54868 \\
  \hline
  20 & 0.17917  & 0.17726 \\
  \hline
  25 & 0.073676 & 0.073303 \\
  \hline
  30 & 0.035946 & 0.03555 \\
  \hline
\end{tabular}

\caption{$L^2$-approximation error for $f$ in Example 1.}\label{Approximation error}
\end{table}

\begin{figure}[H]
    \centering
    \begin{subfigure}[!b]{0.5\textwidth}
        \centering
        \includegraphics[height=4.5cm,width=\textwidth]{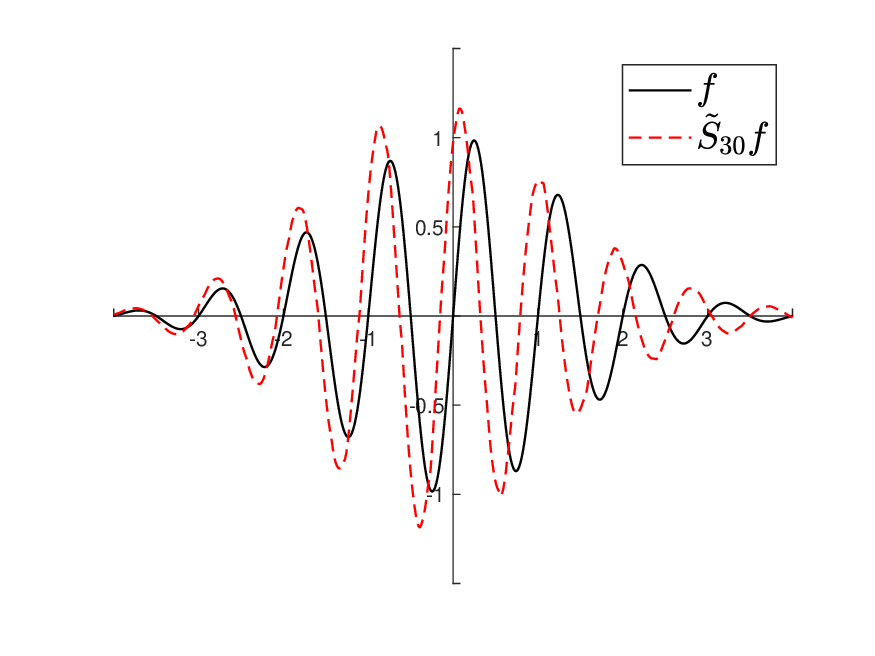}
    \end{subfigure}
    \hspace{-1cm}
    \begin{subfigure}[!b]{0.5\textwidth}
        \centering
        \includegraphics[height=4.5cm,width=\textwidth]{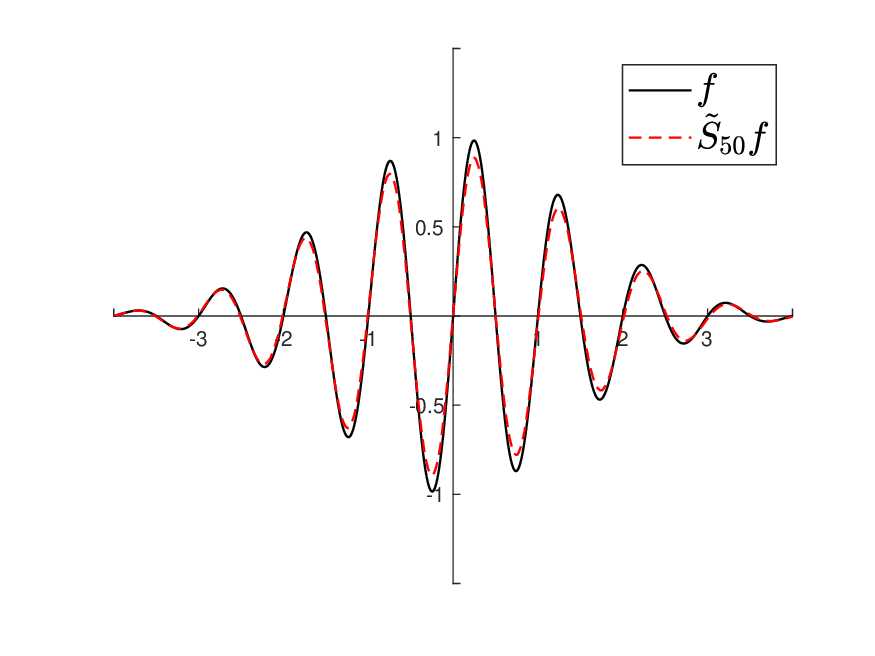}
    \end{subfigure}
    \vspace{-1cm}
    \begin{subfigure}[b]{0.5\textwidth}
        \centering
        \includegraphics[height=4.5cm,width=\textwidth]{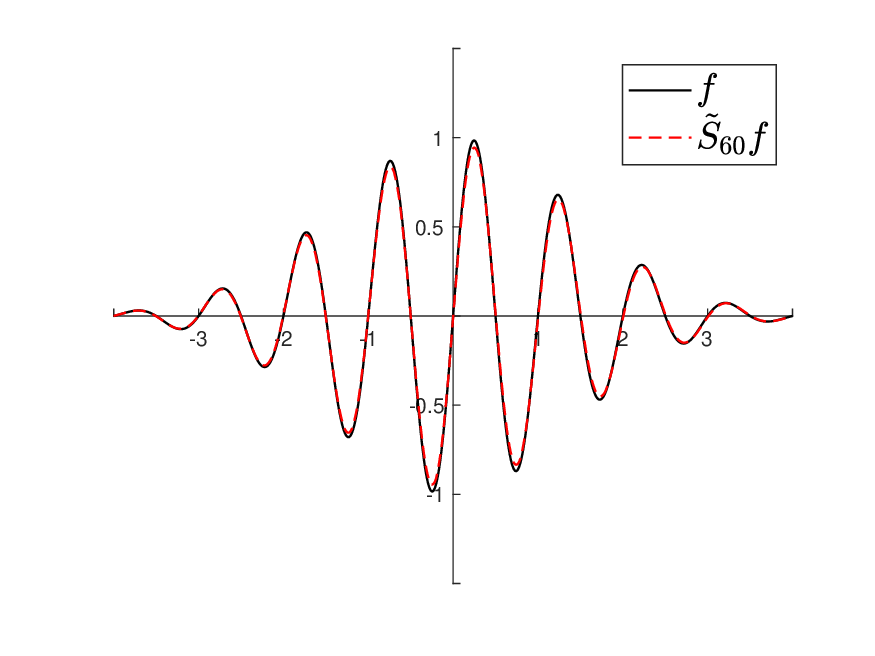}
    \end{subfigure}

    \caption{Approximations of $f$ based on $\widetilde{S}_{W}f$ for Chebyshev spaced nodes in Example 3.}\label{figure7}
\end{figure}

\section{Appendix}

\underline{\textbf{Proof of Lemma \ref{lemma4.1}:}}
By a change of variable, it is enough to prove the result for $W=1.$ 
Let us define
$$\displaystyle f_j(t):=\sum\limits_{i=0}^{r-1}\tbinom{j}{i}i!\sum\limits_{l\in\mathbb{Z}}\sum\limits_{n=0}^{L-1}\left(x_n+\rho l-t\right)^{j-i}\Theta_{ni}(t-\rho l).$$\\
Applying the binomial theorem, we get
\begin{align*}
f_j(t)
&=\sum\limits_{l\in\mathbb{Z}}\sum\limits_{i=0}^{r-1}\tbinom{j}{i} i!\sum\limits_{n=0}^{L-1}\sum\limits_{k=0}^{j-i}\tbinom{j-i}{k}\left(x_n+\rho l\right)^{k}(-t)^{j-i-k}\Theta_{ni}(t-\rho l)\\
&=\sum\limits_{l\in\mathbb{Z}}\sum\limits_{i=0}^{r-1}\tbinom{j}{i}i!\sum\limits_{s=i}^{j}\tbinom{j-i}{s-i}(-t)^{j-s}\sum\limits_{n=0}^{L-1}\left(x_n+\rho l\right)^{s-i}\Theta_{ni}(t-\rho l)\\
&=\sum\limits_{l\in\mathbb{Z}}\sum\limits_{i=0}^{r-1}i!\sum\limits_{s=i}^{j}\tbinom{j-i}{s-i}\tbinom{j}{i}(-t)^{j-s}\sum\limits_{n=0}^{L-1}\left(x_n+\rho l\right)^{s-i}\Theta_{ni}(t-\rho l)\\
&=\sum\limits_{l\in\mathbb{Z}}\sum\limits_{i=0}^{r-1}i!\sum\limits_{s=0}^{j}\tbinom{j}{s}\tbinom{s}{i}(-t)^{j-s}\sum\limits_{n=0}^{L-1}\left(x_n+\rho l\right)^{s-i}\Theta_{ni}(t-\rho l)\\
&=\sum\limits_{s=0}^{j}\tbinom{j}{s}(-t)^{j-s}\sum\limits_{l\in\mathbb{Z}}\sum\limits_{i=0}^{r-1}\tbinom{s}{i}i!\sum\limits_{n=0}^{L-1}\left(x_n+\rho l\right)^{s-i}\Theta_{ni}(t-\rho l).
\end{align*}
It is now straightforward to conclude that statements $(i)$ and $(ii)$ are equivalent by selecting $f(t)=t^s$.
Notice that $f_j(t)$ is a periodic function with period $\rho$. We can easily verify that 
\begin{align*}
\widehat{f_j}(l)&:=\frac{1}{\rho}\int\limits_{0}^{\rho}f_j(t)e^{-2\pi\mathrm{i}lt/\rho}~dt\\
&=\frac{1}{\rho}\sum\limits_{i=0}^{r-1}\tbinom{j}{i}i!\sum\limits_{m=0}^{j-i}\tbinom{j-i}{m}x_n^m(2\pi\mathrm{i})^{m-j+i} \times\widehat{\Theta_{ni}}^{(j-i-m)}\left(\tfrac{l}{\rho}\right)\\
&= \frac{1}{\rho}\sum\limits_{i=0}^{r-1}\tbinom{j}{i}i!\Big(\frac{1}{2\pi \mathrm{i}}\Big)^{j-i}\sum\limits_{m=0}^{j-i}\tbinom{j-i}{m}(2\pi\mathrm{i}x_n)^{m} \times\widehat{\Theta_{ni}}^{(j-i-m)}\left(\tfrac{l}{\rho}\right)\\
&= \frac{1}{\rho}\sum\limits_{i=0}^{r-1}\tbinom{j}{i}i!\Big(\frac{1}{2\pi \mathrm{i}}\Big)^{j-i}\dfrac{d^{j-i}}{dw^{j-i}}\Bigg(\sum\limits_{n=0}^{L-1}
e^{\, 2\pi \mathrm{i} x_n \left(w - \frac{l}{\rho}\right)}
\, \widehat{\Theta}_{n i}(w)\Bigg)\Bigg|_{w=l/\rho}\\
&= \frac{1}{\rho}\sum\limits_{i=0}^{r-1}\tbinom{j}{i}i!\Big(\frac{1}{2\pi \mathrm{i}}\Big)^{j-i}F_{li}^{(j-i)}(l/\rho).
\end{align*}
Now the equivalence of $(ii)$ and $(iii)$ follows from the uniqueness of the Fourier series.\\\\
\underline{\textbf{Proof of Theorem \ref{approximation operator}:}}
Applying H$\Ddot{o}$lder's inequality with $1/p+1/q= 1$ in  \eqref{sampling series}, we get $|(S_Wf)(t)|$
\begin{align*}
&\leq \sum\limits_{i=0}^{r-1}\frac{1}{W^i}\left\{\sum\limits_{l\in\mathbb{Z}}\sum\limits_{n=0}^{L-1}\left|f^{(i)}\left(\frac{x_n+ \rho l}{W}\right)\right|^p\left|\Theta_{ni}(Wt-\rho l)\right|\right\}^{\tfrac{1}{p}}\left\{\sum\limits_{l\in\mathbb{Z}}\sum\limits_{n=0}^{L-1}\left|\Theta_{ni}(Wt-\rho l)\right|\right\}^{\tfrac{1}{q}}\nonumber\\
&\leq \sum\limits_{i=0}^{r-1}\frac{1}{W^i}\left\{\sum\limits_{l\in\mathbb{Z}}\sum\limits_{n=0}^{L-1}\left|f^{(i)}\left(\frac{x_n+ \rho l}{W}\right)\right|^p\left|\Theta_{ni}(Wt-\rho l)\right|\right\}^{\tfrac{1}{p}}\left\{\sum\limits_{n=0}^{L-1}m(\Theta_{ni})\right\}^{\tfrac{1}{q}},
\end{align*}
where
$$m(\Theta_{ni}):=\sup\limits_{u\in[0, \rho]}\sum\limits_{l\in\mathbb{Z}}|\Theta_{ni}(u-\rho l)|
<\infty.$$
Hence when $W\geq1$, $\displaystyle\int\limits_{-\infty}^{\infty}\left|(S_Wf)(t)\right|^p dt$
\begin{align*}
&\leq\sum\limits_{i=0}^{r-1}\frac{1}{W^i}\left\{\sum\limits_{l\in\mathbb{Z}}\sum\limits_{n=0}^{L-1}\left|f^{(i)}\left(\frac{x_n+ \rho l}{W}\right)\right|^p\int\limits_{-\infty}^\infty\left|\Theta_{ni}(Wt-\rho l)\right|dt\right\}\left\{\sum\limits_{n=0}^{L-1}m(\Theta_{ni})\right\}^{\tfrac{p}{q}}\nonumber\\
&\leq\left\{\sum\limits_{n=0}^{L-1}m(\Theta_{ni})\right\}^{\tfrac{p}{q}}\sum\limits_{i=0}^{r-1}\left\{\sum\limits_{l\in\mathbb{Z}}\sum\limits_{n=0}^{L-1}\left|f^{(i)}\left(\frac{x_n+ \rho l}{W}\right)\right|^p\dfrac{1}{W}\|\Theta_{ni}\|_1\right\}.\nonumber
\end{align*}
Hence there exists a constant $K_1>0$ independent of $W$ such that
\begin{equation*}\label{inequalitySWf}
\|S_Wf\|_{p}\leq K_1\|f\|_{\ell^p_{r}(W)}, \text{ for every } f\in \Lambda^p_{r} \text{ and } W\geq 1.
\end{equation*}
Let $g\in  W_p^r.$ Then by Taylor's formula, we have 
\begin{align*}
g^{(i)}\left( \tfrac{x_n + \rho l}{W} \right) &=  
\sum\limits_{\nu = i}^{r-1} \frac{g^{(\nu)}(t)}{(\nu - i)!} 
\left( \frac{x_n + \rho l}{W} - t \right)^{\nu - i} + \tfrac{1}{(r-1-i)!} 
\int\limits_{t}^{\tfrac{x_n + \rho l}{W}} g^{(r)}(u) 
\left( \tfrac{x_n + \rho l}{W} - u \right)^{r-1-i} \, du,
\end{align*} 
for $l \in \mathbb{Z}$ and $t \in \mathbb{R}.$
It follows that $(S_Wg)(t)$
\begin{align*}
&=\sum\limits_{l\in\mathbb{Z}}\Bigg(\sum\limits_{i=0}^{r-1}\sum\limits_{n=0}^{L-1}\frac{1}{W^i}\sum\limits_{\nu=i}^{r-1}\frac{g^{(\nu)}(t)}{(\nu-i)!}\left(\frac{x_n+\rho l}{W}-t\right)^{\nu-i}\Theta_{ni}(Wt- \rho l)\\
&\hspace{0.5cm}+\sum\limits_{i=0}^{r-1}\sum\limits_{n=0}^{L-1}\frac{1}{W^i(r-1-i)!}\sum\limits_{l\in\mathbb{Z}}\Theta_{ni}(Wt-\rho l)\int\limits_{t}^{\frac{x_n+ \rho l}{W}}g^{(r)}(u)\left(\frac{x_n+ \rho l}{W}-u\right)^{r-1-i} ~du\Bigg)\\
&=\sum\limits_{\nu=0}^{r-1}\tfrac{g^{(\nu)}(t)}{\nu!}\sum\limits_{l\in\mathbb{Z}}\Bigg(\sum\limits_{i=0}^{r-1}\sum\limits_{n=0}^{L-1}\tfrac{i!}{W^i}\tbinom{\nu}{i}\left(\tfrac{x_n+ \rho l}{W}-t\right)^{\nu-i}\Theta_{ni}(Wt-\rho l)\\
&\hspace{0.5cm}+\sum\limits_{i=0}^{r-1}\sum\limits_{n=0}^{L-1}\frac{1}{W^i(r-1-i)!}\sum\limits_{l\in\mathbb{Z}}\Theta_{ni}(Wt-\rho l)\int\limits_{t}^{\frac{x_n+\rho l}{W}}g^{(r)}(u)\left(\frac{x_n+\rho l}{W}-u\right)^{r-1-i} ~du\Bigg).
\end{align*}
By applying  Lemma \ref{lemma4.1}, we get $(S_Wg)(t)-g(t)$
\begin{align*}
&= \sum\limits_{i=0}^{r-1} \frac{1}{W^i} \frac{1}{(r-1-i)!} \sum\limits_{l\in\mathbb{Z}} \sum\limits_{n=0}^{L-1} \Theta_{ni}(Wt - \rho l) \nonumber \int\limits_{t}^{\frac{x_n+ \rho l}{W}} g^{(r)}(u) \left(\frac{x_n+\rho l}{W} - u \right)^{r-1-i} du. \nonumber \\
\end{align*}
Since the interpolating kernels $\Theta_{ni}$ are compactly supported on $[-\rho+s+1, \mu+s]$ for all $n=0,\dots,L-1,~i=0,\dots,r-1$,  we have
\begin{multline}\label{5.5}
| (S_W g)(t) - g(t) | \leq \sum\limits_{i=0}^{r-1} \frac{1}{W^i(r-1-i)!} \\
\sum\limits_{n=0}^{L-1}
\sum_{|W t - \rho l| \leq T_s} \left| \int\limits_{t}^{\frac{x_n+\rho l}{W}} \left| g^{(r)}(u) \right| \left| \frac{x_n+\rho l}{W} - u \right|^{r-1-i} du \right|
\times \left|\Theta_{ni}(Wt - \rho l) \right|, 
\end{multline}
where  $T_s = \max \left\{|-\rho+s+1|,|\mu+s|\right\}.$ Since 
\begin{align*} 
\left| \displaystyle\int\limits\limits_{t}^{\frac{x_n+\rho l}{W}} |g^{(r)}(u)| \left| \frac{x_n+\rho l}{W} - u \right|^{r-1-i} du \right|
&\leq \left|\int_{t}^{\frac{x_n+\rho l}{W}} \left| g^{(r)}(u) \right| \left( \frac{T_s+s+1}{W} \right)^{r-1-i} du \right| \\
&=\left( \frac{T_s+s+1}{W} \right)^{r-1-i} \left|\int_{0}^{\frac{x_n+\rho l}{W} - t} \left| g^{(r)}(v + t) \right| dv \right|\\
&\leq\left( \frac{T_s+s+1}{W} \right)^{r-1-i} \int\limits_{|v| \leq \frac{T_s+s+1}{W}} \left| g^{(r)}(v + t) \right| dv,
\end{align*}
\eqref{5.5} becomes $|(S_W g)(t)- g(t)|$
\begin{align*}
 &\leq\sum\limits_{i=0}^{r-1}\frac{1}{W^i(r-1-i)!}  
\sum\limits_{n=0}^{L-1} \sum_{|W t - \rho l| \leq T_s} \left( \frac{T_s+s+1}{W} \right)^{r-1-i} \left|\Theta_{ni}(Wt - \rho l) \right|\int_{|v| \leq \frac{T_s+s+1}{W}} \left| g^{(r)}(v + t) \right| dv
\\
&\leq \sum\limits_{i=0}^{r-1} \sum\limits_{n=0}^{L-1} \frac{m_0(\Theta_{ni})}{W^i(r-1-i)!}  
\left( \frac{T_s+s+1}{W} \right)^{r-1-i} 
\int_{|v| \leq \frac{T_s+s+1}{W}} \left| g^{(r)}(v + t) \right| dv.
\end{align*}
Now by generalized Minkowski inequality, we have
\begin{align*}
\| S_W g - g \|_{p} &\leq \sum\limits_{i=0}^{r-1}\sum\limits_{n=0}^{L-1}\frac{m_0(\Theta_{ni})}{W^i(r-1-i)!} \left( \frac{T_s+s+1}{W} \right)^{r-1-i} \int_{|v| \le \frac{T_s+s+1}{W}} \| g^{(r)}(v+\bullet) \|_{p} \, dv \\
&= \sum\limits_{i=0}^{r-1}\sum\limits_{n=0}^{L-1}\frac{m_0(\Theta_{ni})}{W^i(r-1-i)!} \left( \frac{T_s+s+1}{W} \right)^{r-1-i} \frac{2(T_s+s+1)}{W} \| g^{(r)} \|_{p} \\
&= W^{-r}\sum\limits_{i=0}^{r-1}\sum\limits_{n=0}^{L-1}\frac{m_0(\Theta_{ni})}{(r-1-i)!} 2\left( T_s+s+1\right)^{r-i}  \| g^{(r)} \|_{p}\\
&= C W^{-r} \| g^{(r)} \|_{p},
\end{align*}
where $C>0$ is a constant independent of $W$. 

Thus, the sampling operator $S_W$ satisfies the hypotheses of our interpolation Theorem \ref{interpolation theorem}, which allows us to establish Theorem \ref{approximation operator}.\\\\

\section*{Acknowledgment}
The author (ST) gratefully acknowledges the Ministry of Education, Government of India, for supporting this research through the Prime Minister’s Research Fellowship and Grant (PMRF ID: 1603259).
\section*{Declarations}
\subsection*{Data Availability}
Data sharing does not apply to this article as no datasets were generated or analysed during the current study.
\subsection*{Conflict of interest} The authors declare that there is no conflict of interest.

\end{document}